\theoremstyle{plain}
\newtheorem{thm}{Theorem}[section]
\newtheorem{ass}[thm]{Assumption}
\newtheorem{alg}[thm]{Algorithm}
\newtheorem{prop}[thm]{Proposition}
\newtheorem{lemma}[thm]{Lemma}
\newtheorem{cor}[thm]{Corollary}
\theoremstyle{definition}
\newtheorem{example}[thm]{Example}
\newtheorem{defi}[thm]{Definition}
\theoremstyle{remark}
\newtheorem{remark}[thm]{Remark}
\def\lancuch{(X_n)_{n\geq 0}}
\def\lancucht{(\tilde{X}_n)_{n\geq 0}}
\def\lancuchs{(S_n)_{n\geq 0}}
\def\stany{\mathcal{X}}
\def\borel{\mathcal{B}(\stany)}
\def\1c{\mathbb{I}_C(x)}
\def\d{\textrm{d}}
\def\n0{n_{0}}
\def\Pr{\mathbb{P}}
\def\d{{\rm d}}
\def\Y{\mathcal{Y}}
\def\half{{1 \over 2}}
\begin{document}

\begin{frontmatter}

\title{Adaptive Gibbs samplers}
\runtitle{Adaptive Gibbs samplers}

\begin{aug}
\author{\fnms{Krzysztof} \snm{{\L}atuszy\'{n}ski}
\ead[label=e1]{latuch@gmail.com}}
\and
\author{\fnms{Jeffrey S.} \snm{Rosenthal}\corref{}
\ead[label=e3]{jeff@math.toronto.edu}
}

\runauthor{K. {\L}atuszy\'nski and J.S.Rosenthal}

\affiliation{University of Warwick and University of Toronto}

\address{K. {\L}atuszy\'nski\\ Department of Statistics\\
University of Warwick\\ CV4 7AL, Coventry, UK \\
\printead{e1}
}

\address{J. S. Rosenthal\\Department of Statistics\\
University of Toronto\\ Toronto, Ontario, Canada,
M5S 3G3\\
\printead{e3}}
\end{aug}

\medskip
\centerline{(January, 2010)}

\begin{abstract}
We consider various versions of adaptive Gibbs and
Metropolis-within-Gibbs samplers, which update their
selection probabilities (and perhaps also their proposal
distributions) on the fly during a run, by learning as
they go in an attempt to optimise the algorithm.  We
present a cautionary example of how even a simple-seeming
adaptive Gibbs sampler may fail to converge.  We then
present various positive results guaranteeing convergence
of adaptive Gibbs samplers under certain conditions.
\end{abstract}

\begin{keyword}[class=AMS]
\kwd[Primary ]{60J05, 65C05}
\kwd[; secondary ]{62F15}
\end{keyword}

\begin{keyword}
\kwd{MCMC estimation}\kwd{adaptive MCMC}  \kwd{Gibbs sampling}
\end{keyword}

\end{frontmatter}

\section{Introduction} \label{sec_intro}

Markov chain Monte Carlo is a commonly used approach to evaluating
expectations of the form $\theta:=\int_{\stany}f(x)\pi(\d x),$ where $\pi$
is an intractable probability measure, e.g. known up to a normalising
constant. One simulates $\lancuch,$ an ergodic Markov chain on $\stany,$
evolving according to a transition kernel $P$ with stationary limiting
distribution $\pi$ and, typically, takes ergodic average as an estimate
of $\theta.$ The approach is justified by asymptotic Markov chain theory,
see e.g. \cite{MeynTw, RobRos}. Metropolis algorithms and Gibbs samplers
(to be described in Section~\ref{sec_setting_counter}) are among the
most common MCMC algorithms, c.f. \cite{RoCa, Liu, RobRos}.

The quality of an estimate produced by an MCMC algorithm depends on
probabilistic properties of the underlying Markov chain. Designing an
appropriate transition kernel $P$ that guarantees rapid convergence
to stationarity and efficient simulation is often a challenging task,
especially in high dimensions. For Metropolis algorithms there are
various optimal scaling results \cite{RobertsGelmanGilks, RobRos_MALA,
Bedard_aap, Bedard_beyond, AtchRobRos_MCMCMC, RobRos_scaling_2001,
RobRos, Rosenthal_proposal} which provide ``prescriptions" of how to do
this, though they typically depend on unknown characteristics of $\pi$.

For random scan Gibbs samplers, a further design decision is choosing the
selection probabilities (i.e., coordinate weightings) which will be used
to select which coordinate to update next.  These are usually chosen to be
uniform, but some recent work \cite{LiuWongKong, Levine05a, Levine05b,
Diaconis_StatSci, sylvia1, sylvia2} has suggested that non-uniform
weightings may sometimes be preferable.


For a very simple toy example to illustrate this issue, suppose
$\stany = [0,1] \times [-100,100]$, with $\pi(x_1,x_2) \propto x_1^{100}
(1+\sin(x_2))$.  Then with respect to $x_1$, this $\pi$ puts almost all of
the mass right up against the line $x_1=1$.  Thus, repeated Gibbs sampler
updates of the coordinate $x_1$ make virtually no difference, and do
not need to be done often at all (unless the functional $f$ of interest
is {\it extremely} sensitive to tiny changes in $x_1$).  By contrast,
with respect to $x_2$, this $\pi$ is a highly multi-modal density with
wide support and many peaks and valleys, requiring many updates to
the coordinate $x_2$ in order to explore the state space appropriately.
Thus, an efficient Gibbs sampler would not update each of $x_1$ and $x_2$
equally often; rather, it would update $x_2$ very often and $x_1$ hardly
at all.  
Of course, in this simple example, it is easy to see directly that $x_1$
should be updated less than $x_2$, and furthermore
such efficiencies would only improve the sampler by approximately a
factor of~2.  However, in a high-dimensional example (c.f.~\cite{sylvia2}),
such issues could be much more significant and also much more difficult
to detect manually.

One promising avenue to address this challenge
is {\it adaptive MCMC algorithms}.
As an MCMC simulation progresses, more and more information about the
target distribution $\pi$ is learned.  Adaptive MCMC attempts to use
this new information to redesign the transition kernel $P$ on the fly,
based on the current simulation output.  That is, the transition kernel
$P_n$ used for obtaining $X_n|X_{n-1}$ may depend on $\{X_0, \dots,
X_{n-1}\}$. 
So, in the above
toy example, a good adaptive Gibbs sampler would somehow automatically
``learn'' to update $x_1$ less often, without requiring the user to
determine this manually (which could be difficult or impossible in a
very high-dimensional problem).

Unfortunately, such adaptive algorithms are only valid if their ergodicity
can be established.  The stochastic process $\lancuch$ for an adaptive
algorithm is no longer a Markov chain; the potential benefit of adaptive
MCMC comes at the price of requiring more sophisticated theoretical
analysis. There is substantial and rapidly growing literature on both
theory and practice of adaptive MCMC (see e.g. \cite{Gilks_adap, Haario,
aro, AndrieuMoulines, Haario2, Kadane, RobRos_JAP, RobRos_ex, LatPhD,
chao2, chao3, chao4, BaiRobRos, Bai2, Bai3, SaksmanVihola, Vihola,
AtchadeFort, AtchadeEtAl}) which includes counterintuitive examples where
$X_n$ fails to converge to the desired distribution $\pi$ (c.f. \cite{aro,
RobRos_JAP, BaiRobRos, LatPhD}), as well as many results guaranteeing
ergodicity under various assumptions.  Most of the previous work on
ergodicity of adaptive MCMC has concentrated on adapting Metropolis and
related algorithms, with less attention paid to ergodicity when adapting
the selection probabilities for random scan Gibbs samplers.


Motivated by such considerations, in the present paper we study the
ergodicity of various types of adaptive Gibbs samplers.  To our knowledge,
proofs of ergodicity for adaptively-weighted Gibbs samplers have
previously been considered only by \cite{Levine_Casella}, and we shall
provide a counter-example below (Example~\ref{ex_stairway_to_heaven_2})
to demonstrate that their main result is not correct.  In view of this,
we are not aware of any valid ergodicity results in the literature that
consider adapting selection probabilities of random scan Gibbs samplers,
and we attempt to fill that gap herein.

This paper is organised as follows.  We begin in
Section~\ref{sec_setting_counter} with basic definitions.
In Section~\ref{sec_counterex} we present a cautionary
Example~\ref{ex_stairway_to_heaven_2}, where a seemingly ergodic
adaptive Gibbs sampler is in fact transient (as we prove formally later
in Section~\ref{sec_counter_proof}) and provides a counter-example
to Theorem~2.1 of \cite{Levine_Casella}. Next, we establish various
positive results for ergodicity of adaptive Gibbs samplers.
In Section~\ref{sec_adap_rand_scan_Gibbs}, we consider adaptive random
scan Gibbs samplers (\texttt{AdapRSG}) which update coordinate selection
probabilities as the simulation progresses; in Section~\ref{sec_adap_MwG},
we consider adaptive random scan Metropolis-within-Gibbs samplers
(\texttt{AdapRSMwG}) which update coordinate selection probabilities
as the simulation progresses; and in Section~\ref{sec_adaptadapt}, we
consider adaptive random scan adaptive Metropolis-within-Gibbs samplers
(\texttt{AdapRSadapMwG}) that update coordinate selection probabilities as
well as proposal distributions for the Metropolis steps -- the case that
corresponds most closely to the adaptations performed in the statistical
genetics work of~\cite{sylvia2}.  In each case, we prove that under
reasonably mild conditions, the adaptive Gibbs samplers are guaranteed
to be ergodic, although our cautionary example does show that it is
important to verify some conditions before applying such algorithms.
Finally, in Section~\ref{sec-componentwise} we consider particular
methods of simultaneously adapting the selection probabilities and
proposal distributions, and prove that in addition to being ergodic,
such algorithms are approximately optimal under certain strong
assumptions.

\section{Preliminaries}
\label{sec_setting_counter}

Gibbs samplers are commonly used MCMC algorithms for sampling from
complicated high-dimensional probability distributions $\pi$ in cases
where the full conditional distributions of $\pi$ are easy to sample from.
To define them,
let $(\stany, \mathcal{B}(\mathcal{X}))$ be an $d-$dimensional state
space where $\stany = \stany_1 \times \dots \times \stany_d$ and write
$X_n \in \stany$ as $X_n = (X_{n,1}, \dots, X_{n,d}).$ We shall use
the shorthand notation
$$
X_{n,-i} \ := \
(X_{n,1}, \dots, X_{n,i-1}, X_{n,i+1},
\dots, X_{n,d})
\, ,
$$
and similarly $\stany_{-i} = \stany_1\times \dots
\times \stany_{i-1}\times \stany_{i+1}\times \dots\times \stany_d$.

Let
$\pi(\cdot| x_{-i})$ denote the conditional distribution of $Z_{i} \, |
\, Z_{-i}=x_{-i}$ where $Z \sim \pi$.
The random scan
Gibbs sampler draws $X_n$ given $X_{n-1}$ (iteratively for
$n=1,2,3,\ldots$) by first choosing one coordinate
at random according to some selection probabilities $\alpha =
(\alpha_1, \dots, \alpha_d)$ (e.g.\ uniformly), and then updating that
coordinate by a draw from its conditional distribution.
More precisely, the Gibbs sampler transition kernel $P =
P_{\alpha}$ is the result of performing the following three steps.

\begin{alg}[\texttt{RSG($\alpha$)}]\label{alg_random_Gibbs}~
\begin{enumerate}
\item Choose coordinate $i \in \{1, \dots, d\}$ according to
selection probabilities $\alpha$, i.e.\ with $\Pr(i=j) = \alpha_j$
\item Draw $Y \sim \pi(\cdot| X_{n-1, -i})$
\item Set $X_{n} : = (X_{n-1,1}, \dots, X_{n-1,i-1}, Y, X_{n-1,i+1}, \dots, X_{n-1,d}).$ 
\end{enumerate}
\end{alg}

Whereas the standard approach is to choose the coordinate $i$ at the
first step uniformly at random, which corresponds to $\alpha = (1/d,
\dots, 1/d)$, this may be a substantial waste of simulation effort if $d$
is large and variability of coordinates differs significantly. This
has been discussed theoretically in \cite{LiuWongKong} and also
observed empirically e.g.\ in Bayesian variable selection for linear
models in statistical genetics \cite{sylvia1, sylvia2}. We
consider a class of adaptive random scan Gibbs samplers where selection
probabilities $\alpha = (\alpha_1, \dots, \alpha_d)$ are subject to
optimization within some subset $\Y \subseteq [0,1]^d$
of possible choices. Therefore a single step of our generic adaptive algorithm
for drawing $X_{n}$ given the trajectory $X_{n-1}, \dots, X_0,$ and
current selection probabilities $\alpha_{n-1} = (\alpha_{n-1,1}, \dots,
\alpha_{n-1,d})$ amounts to the following steps, where $R_n(\cdot)$
is some update rule for $\alpha_n$.

\begin{alg}[\texttt{AdapRSG}]\label{alg_Gibbs_adap}~
\begin{enumerate}
\item Set $\alpha_{n}:=R_n(\alpha_{n-1},X_{n-1}, \dots, X_0) \in \Y$
\item Choose coordinate  $i \in \{1, \dots, d\}$ according to selection probabilities $\alpha_{n}$
\item Draw $Y \sim \pi(\cdot| X_{n-1,-i})$
\item Set $X_{n} : = (X_{n-1,1}, \dots, X_{n-1,i-1}, Y, X_{n-1,i+1}, \dots, X_{n-1,d})$ 
\end{enumerate}
\end{alg}

Algorithm~\ref{alg_Gibbs_adap} defines $P_n$, the transition kernel used at time
$n$, and $\alpha_n$ plays here the role of $\Gamma_n$ in the more general adaptive setting of e.g.\ \cite{RobRos_JAP, BaiRobRos}.
Let $\pi_n = \pi_n(x_0, \alpha_0)$ denote the distribution of $X_n$  induced by Algorithm~\ref{alg_random_Gibbs}~or~\ref{alg_Gibbs_adap}, given starting values $x_0$ and $\alpha_0,$ i.e. for $B \in \mathcal{B}(\mathcal{X}),$ \begin{equation}
\pi_n(B) = \pi_n\big((x_0, \alpha_0),B\big) := \mathbb{P}(X_n \in B | X_0 = x_0, \alpha_0).
\end{equation}
Clearly if one uses Algorithm~\ref{alg_random_Gibbs} then $\alpha_0=\alpha$  remains fixed and $\pi_n(x_0, \alpha)(B) = P_{\alpha}^n(x_0, B)$. By $\|\nu-\mu\|_{TV}$ denote the total variation distance between probability measures $\nu$ and $\mu.$ Let 
\begin{equation}\label{eqn_def_of_T_n} T(x_0, \alpha_0, n):= \|\pi_n(x_0, \alpha_0)-\pi\|_{TV}. \end{equation} 
We call the adaptive Algorithm~\ref{alg_Gibbs_adap} {\it ergodic} if
$T(x_0, \alpha_0, n) \to 0$ for $\pi$-almost every starting state
$x_0$ and all $\alpha_0 \in \Y$.



We shall also consider random scan Metropolis-within-Gibbs
samplers that instead of sampling from the full conditional at step~(2)
of Algorithm~\ref{alg_random_Gibbs} (respectively at step~(3) of
Algorithm~\ref{alg_Gibbs_adap}), perform a single Metropolis
step.  More precisely, given $X_{n-1,-i}$ the $i$-th coordinate
$X_{n-1, i}$ is updated by a draw $Y$ from the proposal distribution
$Q_{X_{n-1,-i}}(X_{n-1, i}, \cdot)$ with the usual
Metropolis acceptance probability for the marginal stationary
distribution $\pi(\cdot| X_{n-1, -i})$.  Such Metropolis-within-Gibbs
algorithms were originally proposed by~\cite{Metropolis} and have been
very widely used.  Versions of this algorithm which adapt the proposal
distributions $Q_{X_{n-1,-i}}(X_{n-1, i}, \cdot)$
were considered by e.g.\ \cite{Haario2, RobRos_ex}, but always with
fixed (usually uniform) coordinate selection probabilities.
If instead the proposal
distributions $Q_{X_{n-1,-i}}(X_{n-1, i}, \cdot)$ remain fixed,
but the selection
probabilities $\alpha_i$ are adapted on the fly, we obtain the following
algorithm
(where $q_{x,-i}(x,y)$ is the density function for
$Q_{x,-i}(x,\cdot)$).

\begin{alg}[\texttt{AdapRSMwG}]~
\begin{enumerate}
\item Set $\alpha_{n}:=R_n(\alpha_{n-1},X_{n-1}, \dots, X_0) \in \Y$
\item Choose coordinate $i \in \{1, \dots, d\}$ according to
selection probabilities $\alpha_n$
\item Draw $Y \sim Q_{X_{n-1,-i}}(X_{n-1, i}, \cdot)$
\item With probability
\begin{equation}
\label{acceptprob}
\min\left(1, \ \ {
\pi(Y| X_{n-1, -i})
\
q_{X_{n-1, -i}}(Y,X_{n-1,i})
\over
\pi(X_{n-1}| X_{n-1, -i})
\
q_{X_{n-1, -i}}(X_{n-1,i},Y)
} \right) \, ,
\end{equation}
accept the proposal and set
$$
X_{n} = (X_{n-1,1}, \dots, X_{n-1,i-1}, Y, X_{n-1,i+1}, \dots, X_{n-1,d})
\, ;
$$
otherwise, reject the proposal and set $X_n = X_{n-1}$.
\end{enumerate}
\end{alg}

\noindent
Ergodicity of
\texttt{AdapRSMwG} is considered in Section~\ref{sec_adap_MwG} below.
Of course, if the proposal distribution
$Q_{X_{n-1,-i}}(X_{n-1, i}, \cdot)$ is symmetric about $X_{n-1}$, then
the $q$ factors in the acceptance probability~(\ref{acceptprob}) cancel
out, and~(\ref{acceptprob}) reduces to the simpler probability
$\min\big(1, \ \pi(Y| X_{n-1, -i}) / \pi(X_{n-1}| X_{n-1, -i})\big)$.

We shall also consider versions of the algorithm in which the proposal
distributions $Q_{X_{n-1,-i}}(X_{n-1, i}, \cdot)$ are also chosen
adaptively, from some family $\{Q_{x_{-i},\gamma}\}_{\gamma
\in \Gamma_i}$ with corresponding density functions $q_{x_{-i},\gamma}$,
as in e.g.\ the statistical genetics
application~\cite{sylvia1, sylvia2}.  Versions of such algorithms with fixed
selection probabilities are considered by e.g.\ \cite{Haario2} and
\cite{RobRos_ex}.  They require additional adaptation parameters
$\gamma_{n,i}$ that are updated on the fly and are allowed to
depend on the past trajectories.  More precisely, if $\gamma_n =
(\gamma_{1,n},\dots, \gamma_{d,n})$ and  $\mathcal{G}_n = \sigma\{X_0,
\dots, X_n, \alpha_0, \dots, \alpha_n, \gamma_0, \dots, \gamma_n\}$,
then the conditional distribution of $\gamma_{n}$ given $\mathcal{G}_{n-1}$
can be specified by the particular algorithm used, via a second update
function $R'_n$.  If we combine such
proposal distribution adaptions with coordinate selection probability
adaptions, this results in a doubly-adaptive algorithm, as follows.

\begin{alg}[\texttt{AdapRSadapMwG}]~
\begin{enumerate}
\item Set $\alpha_{n}:=R_n(\alpha_{n-1},X_{n-1}, \dots, X_0,
\gamma_{n-1},\dots,\gamma_0) \in \Y$
\item Set $\gamma_{n}:=R'_n(\alpha_{n-1},X_{n-1}, \dots, X_0,
\gamma_{n-1},\dots,\gamma_0) \in \Gamma_1 \times \ldots \times \Gamma_n$
\item Choose coordinate $i \in \{1, \dots, d\}$ according to
selection probabilities $\alpha$, i.e.\ with $\Pr(i=j) = \alpha_j$
\item Draw $Y \sim Q_{X_{n-1,-i},\gamma_{n-1}}(X_{n-1, i}, \cdot)$
\item With probability~(\ref{acceptprob}),
$$
\min\left(1, \ \ {
\pi(Y| X_{n-1, -i})
\
q_{X_{n-1, -i}, \gamma_{n-1}}(Y,X_{n-1,i})
\over
\pi(X_{n-1}| X_{n-1, -i})
\
q_{X_{n-1, -i}, \gamma_{n-1}}(X_{n-1,i},Y)
} \right) \, ,
$$
accept the proposal and set
$$
X_{n} = (X_{n-1,1}, \dots, X_{n-1,i-1}, Y, X_{n-1,i+1}, \dots, X_{n-1,d})
\, ;
$$
otherwise, reject the proposal and set $X_n = X_{n-1}$.
\end{enumerate}
\end{alg}

\noindent
Ergodicity of \texttt{AdapRSadapMwG} is considered in
Section~\ref{sec_adaptadapt} below.


\section{A counter-example}
\label{sec_counterex}

Adaptive algorithms destroy the Markovian nature of $\lancuch$, and are
thus notoriously difficult to analyse theoretically.  In particular,
it is easy to be tricked into thinking that a simple adaptive algorithm
``must'' be ergodic when in fact it is not.

For example, Theorem 2.1 of \cite{Levine_Casella} states that ergodicity
of adaptive Gibbs samplers follows from the following two simple
conditions:

\begin{itemize}

\item[(i)] $\alpha_n \to \alpha$ a.s. for some fixed $\alpha \in
(0,1)^d$; and

\item[(ii)] The random scan Gibbs sampler with fixed selection
probabilities $\alpha$ induces an ergodic Markov chain with stationary
distribution $\pi$.

\end{itemize}


Unfortunately, this claim is false, i.e.\ (i) and (ii) alone
do not guarantee ergodicity, as the following example and
proposition demonstrate.  (It seems that in the proof of Theorem~2.1
in~\cite{Levine_Casella}, the same measure is used to represent
trajectories of the adaptive process and of a corresponding non-adaptive
process, which is not correct and thus leads to the error.)



\begin{example} \label{ex_stairway_to_heaven_2} 
Let $\mathbb{N} = \{1,2,\dots\}$, and let the state space
$\stany = \{(i,j) \in \mathbb{N} \times \mathbb{N} : i = j
\textrm{ or } i = j+1\}$, with target distribution given by
$\pi(i,j) \propto j^{-2}.$
On $\stany$, consider a class of adaptive random scan Gibbs samplers for $\pi$,
as defined by Algorithm \ref{alg_Gibbs_adap}, with update rule given by:
\begin{equation}\label{eqn_formula_for_alpha}
R_n\Big(\alpha_{n-1},X_{n-1}=(i,j)\Big) = \left\{\begin{array}{lll} \Big\{\frac{1}{2} + \frac{4}{a_n},\frac{1}{2} - \frac{4}{a_n}\Big\}  & \textrm{if} & i=j,\\ & & \\ \Big\{\frac{1}{2} - \frac{4}{a_n},\frac{1}{2} + \frac{4}{a_n}\Big\}  & \textrm{if} & i=j+1,
\end{array}\right.
\end{equation}
for some choice of the sequence $(a_n)_{n=0}^\infty$ satisfying
$8 <a_n \nearrow \infty$.
\end{example}


Example~\ref{ex_stairway_to_heaven_2} satisfies assumptions~(i)
and~(ii) above.  Indeed, (i) clearly holds since $\alpha_n \to \alpha :=
(\half,\half)$, and (ii) follows immediately from the standard Markov
chain properties of irreducibility and aperiodicity (c.f. \cite{MeynTw,
RobRos}).  However, if $a_n$ increases to $\infty$ slowly enough,
then the example exhibits transient behaviour and is not ergodic.
More precisely, we shall prove the following:



\begin{prop}\label{fact_X_nonergodic} There exists a choice of the
$(a_n)$ for which
the process $\lancuch$ defined in Example~\ref{ex_stairway_to_heaven_2} is
not ergodic. Specifically, 
starting at $X_0=(1,1)$, we have
$\mathbb{P}(X_{n,1} \to \infty) >0$, i.e.\ the
process exhibits transient behaviour with positive probability, so
it does not converge in distribution to any probability measure on
$\stany$.  In particular, $||\pi_n - \pi||_{TV} \nrightarrow 0$.
\end{prop}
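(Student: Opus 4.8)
The plan is to reduce the statement to a transience property of the first coordinate $L_n := X_{n,1}$. Under $\pi$ the first marginal puts mass proportional to $i^{-2}+(i-1)^{-2}$ on $i$ (and proportional to $1$ on $i=1$), so it is a proper probability distribution on $\mathbb{N}$ with an $i^{-2}$ tail; hence if $\Pr(L_n\to\infty)>0$ then for every fixed $N$ one has $\liminf_n \pi_n(\{L_n>N\})\ge \Pr(L_n\to\infty)>0$ while $\pi(\{i>N\})\to 0$, so $\|\pi_n-\pi\|_{TV}\nrightarrow 0$. Everything thus reduces to exhibiting one sequence $(a_n)$ for which, started from $X_0=(1,1)$, the first coordinate escapes to $+\infty$ with positive probability.

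The key structural observation I would establish first is that $L_n$ can only change when coordinate~$1$ is updated, and that the direction of such a change is forced by the current type of the state. From a diagonal state $(i,i)$ a coordinate-$1$ update redraws the first coordinate uniformly from $\{i,i+1\}$ (both values carry equal $\pi$-mass $\propto i^{-2}$), so $L_n$ either stays or increases; from a sub-diagonal state $(i,i-1)$ it redraws uniformly from $\{i-1,i\}$, so $L_n$ either stays or decreases. Combining this with the adaptive rule~\eqref{eqn_formula_for_alpha}, the one-step conditional drift of $L_n$ is $+\tfrac12(\tfrac12+\tfrac{4}{a_n})$ on diagonal states and $-\tfrac12(\tfrac12-\tfrac{4}{a_n})$ on sub-diagonal states. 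The decisive point is that the adaptation over-weights coordinate~$1$ exactly on the diagonal, where it can only push $L_n$ up, and under-weights it on the sub-diagonal, where it can only push $L_n$ down. I would then analyse the induced two-state chain on the type (diagonal $D$ versus sub-diagonal $S$), whose switching probabilities involve $q_i=i^{-2}/(i^{-2}+(i-1)^{-2})\to\tfrac12$; for large $i$ the type is essentially balanced, and averaging the two drifts against the quasi-stationary type distribution yields a \emph{net} positive drift of $L_n$ of order $4/a_n$ at high levels (it may be negative near the boundary $i$ small, where $q_i$ is noticeably below $\tfrac12$, but only the high-level behaviour matters for escape).

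With the drift estimate in hand, the proof becomes a question of whether this decaying-but-persistent bias suffices to make $L_n$ transient with positive probability. Here I would \emph{choose} $(a_n)$ to increase to $\infty$ as slowly as the constraint $8<a_n\nearrow\infty$ permits (e.g.\ on the order of $\log n$), so that the accumulated drift $\sum_m 4/a_m$ grows far faster than the diffusive fluctuations $O(\sqrt{n})$ of the increments. Concretely, I would compare $L_n$ with a time-inhomogeneous nearest-neighbour walk whose net per-step drift is of order $1/a_n$, and control downward excursions by a gambler's-ruin / supermartingale estimate: once $L_n$ has climbed past a fixed threshold $L_0$ (beyond which $q_i$ is close enough to $\tfrac12$ that the net drift is genuinely positive), the probability that the walk ever returns below its current height is summable in the height, so a Borel--Cantelli argument gives that, on an event of positive probability, $L_n$ makes only finitely many downcrossings and hence $L_n\to\infty$. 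Since the walk reaches $L_0$ from $(1,1)$ in finite time with positive probability while the bias is still substantial, the two positive-probability events combine to give $\Pr(L_n\to\infty)>0$.

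The main obstacle I anticipate is the interplay between \emph{real time} and \emph{level}: the bias $4/a_n$ is indexed by the number of elapsed steps $n$, not by the current level, so a path that dawdles sees its bias decay before it has climbed. Making the escape estimate rigorous therefore requires coupling the random climb time to the decay of $a_n$ and ruling out such dawdling excursions --- this is precisely why the construction needs $a_n$ to grow slowly and why the standard time-homogeneous transience criteria cannot be quoted directly. The adaptivity itself adds nothing beyond this difficulty, since the type of $X_{n-1}$ is a deterministic function of the current state; but it is exactly what prevents one from treating $L_n$ as a Markov chain and forces the hands-on drift-and-excursion analysis described above.
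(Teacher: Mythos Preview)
Your intuition is correct and your overall strategy is viable: the adaptive rule biases the chain upward by an amount of order $1/a_n$, and with $a_n$ growing slowly enough (logarithmically) this cumulative drift beats the diffusive fluctuations and forces escape with positive probability. However, your approach differs from the paper's in two meaningful ways.

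First, the paper uses a cleaner one-dimensional reduction: rather than tracking $L_n=X_{n,1}$ together with the type (diagonal vs.\ sub-diagonal) and averaging against a ``quasi-stationary type distribution'', it sets $\tilde X_n := X_{n,1}+X_{n,2}-2$, which moves by $\pm 1$ or $0$ at \emph{every} step regardless of which coordinate is updated. This collapses your two-state type analysis into a single increment law; the exact distribution of $\tilde X_n-\tilde X_{n-1}$ is computed directly from the full conditionals and the adaptive selection probabilities (your $q_i$ enters there), and is shown to stochastically dominate the simple law $\nu_n=\{\tfrac14-\tfrac{1}{a_n},\,\tfrac12,\,\tfrac14+\tfrac{1}{a_n}\}$ once the level is large relative to $a_n$.

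Second, instead of a supermartingale/Borel--Cantelli argument on downcrossings, the paper constructs an explicit auxiliary walk $S_n=\sum_{k\le n} Y_k$ with independent $Y_k\sim\nu_k$, couples $\tilde X_n$ to $S_n$ so that $\tilde X_n\ge S_n$ whenever the level stays ahead of $a_n$, and then uses Hoeffding's inequality on blocks of $S_n$ to show that $S_n$ stays above a prescribed growing floor with positive probability. The specific sequence $(a_n)$ is piecewise-constant on blocks of rapidly growing length, tuned so that the Hoeffding tail probabilities are summable and the floor outpaces $a_n$; this block construction is exactly how the paper resolves the time-versus-level difficulty you correctly flag as the main obstacle.

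Your route would work, but making the gambler's-ruin estimate rigorous for a nearest-neighbour walk whose drift decays in \emph{calendar time} rather than in level --- and simultaneously keeping the level ahead of $a_n$ so that the drift estimate remains valid --- is precisely the delicate part, and you have not supplied it. The paper's coupling-plus-Hoeffding approach handles this circularity more directly by building the time/level synchronisation into the choice of $(a_n)$ from the outset.
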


\begin{remark}
In fact, we believe that in
Proposition~\ref{fact_X_nonergodic},
$\mathbb{P}(X_{n, 1} \to \infty) = 1$,
though to reduce technicalities we only prove that
$\mathbb{P}(X_{n,1} \to \infty) >0$, which is sufficient
to establish non-ergodicity.
\end{remark}
\medskip

A detailed proof of Proposition~\ref{fact_X_nonergodic} is
presented in Section~\ref{sec_counter_proof}.  We also simulated
Example~\ref{ex_stairway_to_heaven_2} on a computer (with the $(a_n)$ as
defined in Section~\ref{sec_counter_proof}), resulting in the following
trace plot of $X_{n,1}$ which illustrates the transient behaviour since
$X_{n,1}$ increases quickly and steadily as a function of $n$:

\centerline{\includegraphics{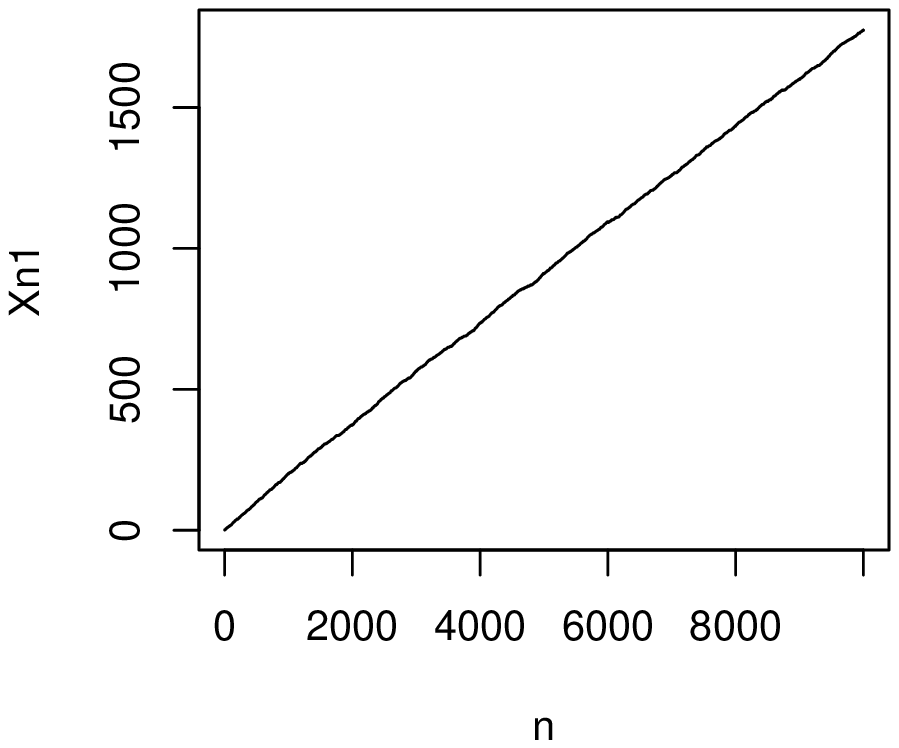}}



\section{Ergodicity of adaptive random scan Gibbs samplers}
\label{sec_adap_rand_scan_Gibbs}


We now present various positive results about ergodicity of adaptive
Gibbs samplers under various assumptions.
Most of our results are specific
to {\it uniformly ergodic} chains.  (Recall that a Markov chain with
transition kernel $P$ is uniformly ergodic if there exist $M<\infty$
and $\rho<1$ s.t. $ \|P^n(x, \cdot) - \pi(\cdot)\|_{TV} \leq M\rho^n \;
\textrm{for every} \; x \in \stany$; see e.g.\ \cite{MeynTw,RobRos} for
this and other notions related to general state space Markov chains.)
In some sense this is a severe restriction, since most MCMC algorithms
arising in statistical applications are not uniformly ergodic.  However,
truncating the variables involved at some (very large) value is usually
sufficient to ensure uniform ergodicity without affecting the statistical
conclusions in any practical sense, so this is not an insurmountable
practical problem.  We do plan to separately consider adaptive Gibbs
samplers in the non-uniformly ergodic case, but that case appears to be
considerably more technical so we do not pursue it further here.

To continue, recall that \texttt{RSG($\alpha$)} stands for random scan Gibbs
sampler with selection probabilities $\alpha$ as defined by
Algorithm~\ref{alg_random_Gibbs}, and \texttt{AdapRSG} is the adaptive
version as defined by Algorithm~\ref{alg_Gibbs_adap}. 
For notation, let $\Delta_{d-1} := \{(p_1, \dots, p_d)  \in
\mathbb{R}^{d}: p_i
\geq 0,\; \sum_{i=1}^d p_i = 1\}$ be the $(d-1)-$dimensional
probability simplex, and let
\begin{equation} \label{Ydef}
\mathcal{Y} := [\varepsilon, 1]^d \cap  \Delta_{d-1}
\end{equation}
for some $0 < \varepsilon \leq 1/d$.
We shall generally assume that all our selection probabilities are in
this set $\Y$, to avoid difficulties arising when one or more of the
selection probabilities approach zero so certain coordinates are
virtually never updated and thus get ``stuck''.

The main result of this section is the following.


\begin{thm}\label{thm_uniform_main}
Let the selection probabilities $\alpha_n \in \Y$ for all $n$,
with $\Y$ as in~(\ref{Ydef}).
Assume that
\begin{itemize}
\item[(a)] $|\alpha_n - \alpha_{n-1}| \to 0$ in probability for fixed starting values $x_0 \in \stany$ and $\alpha_0 \in \mathcal{Y}.$
\item[(b)] there exists $\beta \in \mathcal{Y}$ s.t. \texttt{RSG($\beta$)} is uniformly ergodic.
\end{itemize}
Then \texttt{AdapRSG} is ergodic, i.e.\
\begin{equation}\label{eqn_thm_unif_main_1} T(x_0, \alpha_0, n) \to 0\qquad \textrm{as} \quad  n \to \infty. \end{equation} Moreover, if \begin{itemize} \item[(a')] $\sup_{x_0, \alpha_0}|\alpha_n - \alpha_{n-1}| \to 0 \quad \textrm{in probability,}$\end{itemize} then convergence of \texttt{AdapRSG} is also uniform over all $x_0, \alpha_0,$ i.e.  \begin{equation}\label{eqn_thm_unif_main_2} \sup_{x_0, \alpha_0}T(x_0, \alpha_0, n) \to 0\qquad \textrm{as} \quad  n \to \infty. \end{equation}
\end{thm}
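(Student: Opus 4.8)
The plan is to reduce the statement to the two standard hypotheses of the general adaptive MCMC ergodicity framework of \cite{RobRos_JAP, BaiRobRos} -- \emph{Diminishing Adaptation} and \emph{Containment} -- and to verify both using the linear structure of the selection kernels. Writing $P_i$ for the kernel that updates only coordinate $i$ (draw $Y \sim \pi(\cdot\,|\,x_{-i})$ and replace the $i$-th coordinate), every \texttt{RSG($\alpha$)} kernel decomposes as $P_\alpha = \sum_{i=1}^d \alpha_i P_i$, with each $P_i$ independent of $\alpha$ and reversible with respect to $\pi$; in particular every $P_\alpha$ has stationary distribution $\pi$.

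First I would establish \emph{simultaneous uniform ergodicity} of the family $\{P_\alpha : \alpha \in \mathcal{Y}\}$, which yields Containment. The key observation is a shared minorization. Since $\alpha_i \geq \varepsilon$ for every $\alpha \in \mathcal{Y}$ while $\beta_i \leq 1$, we have $\alpha_i \geq \varepsilon \beta_i$, hence $P_\alpha(x, \cdot) \geq \varepsilon\, P_\beta(x, \cdot)$ as measures, for all $x \in \X$ and all $\alpha \in \mathcal{Y}$. Because both kernels are nonnegative, this one-step domination iterates to $P_\alpha^{m}(x,\cdot) \geq \varepsilon^{m} P_\beta^{m}(x, \cdot)$ for every $m$. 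Assumption~(b), via the Doeblin characterisation of uniform ergodicity, supplies an $m$, a $\delta > 0$ and a probability measure $\nu$ with $P_\beta^{m}(x, \cdot) \geq \delta\,\nu(\cdot)$ for all $x$; combining, $P_\alpha^{m}(x,\cdot) \geq \varepsilon^{m}\delta\,\nu(\cdot)$ uniformly over $x \in \X$ and $\alpha \in \mathcal{Y}$. This single minorization forces a common geometric rate, $\|P_\alpha^{km}(x,\cdot) - \pi\|_{TV} \leq (1 - \varepsilon^m\delta)^k$, so all convergence times are bounded uniformly in $(x,\alpha)$ and Containment holds trivially.

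Next I would verify Diminishing Adaptation. From the decomposition, $P_{\alpha_n} - P_{\alpha_{n-1}} = \sum_i (\alpha_{n,i} - \alpha_{n-1,i}) P_i$, and since each $P_i(x,\cdot)$ is a probability measure one obtains the Lipschitz bound $\sup_{x} \|P_{\alpha_n}(x,\cdot) - P_{\alpha_{n-1}}(x,\cdot)\|_{TV} \leq \|\alpha_n - \alpha_{n-1}\|_1$. By equivalence of norms on $\mathbb{R}^d$, assumption~(a) then gives $\sup_x \|P_{\alpha_n}(x,\cdot) - P_{\alpha_{n-1}}(x,\cdot)\|_{TV} \to 0$ in probability, which is exactly Diminishing Adaptation. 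With both conditions in hand, the general framework delivers $T(x_0,\alpha_0,n) \to 0$, proving \eqref{eqn_thm_unif_main_1}.

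Finally, for the uniform conclusion \eqref{eqn_thm_unif_main_2}, the point is that every estimate above is already uniform over starting values: the minorization constant $\varepsilon^m\delta$, the geometric rate, and, under the stronger hypothesis~(a'), the bound $\sup_{x_0,\alpha_0}\|P_{\alpha_n} - P_{\alpha_{n-1}}\|_{TV} \leq \sup_{x_0,\alpha_0}\|\alpha_n - \alpha_{n-1}\|_1 \to 0$. I would therefore run the coupling argument underlying the adaptive ergodicity theorem in quantitative form, checking that each term in the resulting bound on $T(x_0,\alpha_0,n)$ depends on $(x_0,\alpha_0)$ only through quantities controlled uniformly, and then take the supremum. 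I expect this last step to be the main obstacle: the off-the-shelf theorem is phrased for fixed starting values, so \eqref{eqn_thm_unif_main_2} requires a uniform, quantitative version of the coupling bound rather than a direct citation, and some care is needed to confirm that the coupling construction introduces no dependence on $(x_0,\alpha_0)$ beyond what hypothesis~(a') already controls.
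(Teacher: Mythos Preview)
Your proposal is correct and follows the same overall strategy as the paper: verify simultaneous uniform ergodicity and diminishing adaptation, then invoke Theorem~1 of \cite{RobRos_JAP}, with the uniform conclusion~\eqref{eqn_thm_unif_main_2} obtained by inspecting that proof and noting all bounds are uniform in $(x_0,\alpha_0)$.

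The technical arguments you use for the two key ingredients differ slightly from the paper's, and are in fact somewhat more direct. For simultaneous uniform ergodicity, the paper writes $P_\alpha = r P_\beta + (1-r)P_q$ with $r = \min_i \alpha_i/\beta_i \geq \varepsilon/(1-(d-1)\varepsilon)$ and uses the mixture structure to transfer the minorization; your one-line domination $P_\alpha \geq \varepsilon P_\beta$ (from $\alpha_i \geq \varepsilon \geq \varepsilon\beta_i$) achieves the same end with a slightly worse constant but less bookkeeping. For diminishing adaptation, the paper again uses a mixture decomposition to obtain $\|P_\alpha - P_{\alpha'}\|_{TV} \leq |\alpha-\alpha'|/\varepsilon$, whereas your linear expansion $P_\alpha - P_{\alpha'} = \sum_i(\alpha_i-\alpha'_i)P_i$ gives the sharper bound $\|\alpha-\alpha'\|_1$ without the $1/\varepsilon$ factor. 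Both routes are valid; yours trades the paper's unified mixture trick for two separate elementary observations, which is a reasonable exchange.
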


\begin{remark}\label{rem_after_thm_unif}

\begin{enumerate}

\item Assumption \textit{(b)} will typically be
verified for $\beta = (1/d, \dots, 1/d)$; see also
Proposition~\ref{prop_checking_unif_erg_for_G} below.

\item We expect that most adaptive random scan Gibbs samplers will be
designed so that $|\alpha_n - \alpha_{n-1}| \leq a_n$ for every $n\geq
1$, $x_0 \in \stany$, $\alpha_0\in \mathcal{Y}$, and $\omega \in \Omega$,
for some deterministic sequence $a_n \to 0$ (which holds for e.g.\ the
adaptations considered in~\cite{sylvia2}).
In such cases, \textit{(a')} is automatically satisfied.

\item The sequence $\alpha_n$ is not required to
converge, and in particular the amount of adaptation, i.e.\
$\sum_{n=1}^{\infty}|\alpha_n-\alpha_{n-1}|$, is allowed to be infinite.

\item In Example~\ref{ex_stairway_to_heaven_2},
condition $(a')$ is satisfied but condition $(b)$ is not.

\item If we modify Example~\ref{ex_stairway_to_heaven_2} by truncating
the state space to say $\tilde{\stany} = \stany \cap (\{1, \dots,
M\}\times\{1, \dots, M\})$ for some $1<M<\infty,$, then the corresponding
adaptive Gibbs sampler is ergodic, and~(\ref{eqn_thm_unif_main_2}) holds.


\end{enumerate}

\end{remark}

Before we proceed with the proof of Theorem~\ref{thm_uniform_main},
we need some preliminary lemmas, which may be of independent interest.

\begin{lemma}\label{lemma_uniformly_uniformly}
Let $\beta \in \mathcal{Y}$
with $\Y$ as in~(\ref{Ydef}).
If \texttt{RSG($\beta$)} is uniformly ergodic, then also \texttt{RSG($\alpha$)} is uniformly ergodic  for every $\alpha \in \mathcal{Y}$. Moreover there exist $M< \infty$ and $\rho < 1$ s.t. $\;\sup_{x_0 \in \stany, \alpha \in \mathcal{Y}}T(x_0, \alpha, n) \leq M\rho^n \to 0.$
\end{lemma}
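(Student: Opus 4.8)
The plan is to reduce uniform ergodicity to a one-step minorization that holds uniformly over $\alpha \in \Y$. First I would write the \texttt{RSG($\alpha$)} kernel as the mixture $P_\alpha = \sum_{i=1}^d \alpha_i P_i$, where $P_i$ denotes the Markov kernel that redraws coordinate $i$ from its full conditional $\pi(\cdot \mid x_{-i})$ (steps (2)--(3) of Algorithm~\ref{alg_random_Gibbs}). Each $P_i$ is reversible with respect to $\pi$, so every mixture $P_\alpha$ has $\pi$ as a stationary distribution; in particular, by the remark following Algorithm~\ref{alg_Gibbs_adap}, $T(x_0,\alpha,n) = \|P_\alpha^n(x_0,\cdot) - \pi\|_{TV}$. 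Next I would invoke the standard characterisation of uniform ergodicity (c.f.~\cite{MeynTw,RobRos}): \texttt{RSG($\beta$)} being uniformly ergodic is equivalent to the whole space $\stany$ being a small set for $P_\beta$, i.e.\ the existence of an integer $m \geq 1$, a constant $\delta > 0$, and a probability measure $\nu$ on $\stany$ with $P_\beta^m(x,\cdot) \geq \delta\,\nu(\cdot)$ for every $x \in \stany$.

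The heart of the argument is a kernel domination that transfers this minorization to every $\alpha \in \Y$ with a constant independent of $\alpha$. Since $\alpha \in \Y = [\varepsilon,1]^d \cap \Delta_{d-1}$ forces $\alpha_i \geq \varepsilon$, while $\beta_i \leq 1$, we have $\alpha_i \geq \varepsilon \geq \varepsilon\beta_i$ for each $i$, whence
\begin{equation*}
P_\alpha - \varepsilon P_\beta \;=\; \sum_{i=1}^d (\alpha_i - \varepsilon\beta_i)\,P_i \;\geq\; 0
\end{equation*}
as a (sub-probability) kernel, i.e.\ $P_\alpha(x,\cdot) \geq \varepsilon\,P_\beta(x,\cdot)$ for all $x$. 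Writing $P_\alpha = \varepsilon P_\beta + Q$ with $Q := P_\alpha - \varepsilon P_\beta \geq 0$ and expanding $P_\alpha^m = (\varepsilon P_\beta + Q)^m$ into the sum of $2^m$ products of nonnegative kernels, every term is nonnegative and one of them equals $\varepsilon^m P_\beta^m$; hence $P_\alpha^m \geq \varepsilon^m P_\beta^m$. Combining this with the minorization for $P_\beta$ yields the uniform minorization $P_\alpha^m(x,\cdot) \geq \varepsilon^m\delta\,\nu(\cdot)$ for all $x \in \stany$ and all $\alpha \in \Y$, with $m$, $\nu$, and the constant $\tilde\delta := \varepsilon^m\delta$ all independent of $\alpha$.

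Finally I would feed this uniform Doeblin condition into the standard coupling (or Doeblin) bound, which gives $\|P_\alpha^n(x_0,\cdot) - \pi\|_{TV} \leq (1-\tilde\delta)^{\lfloor n/m\rfloor}$ for every $x_0$, $\alpha$, and $n$. Setting $\rho := (1-\tilde\delta)^{1/m} < 1$ and $M := (1-\tilde\delta)^{-1}$ to absorb the floor, this reads $\sup_{x_0 \in \stany,\,\alpha \in \Y} T(x_0,\alpha,n) \leq M\rho^n \to 0$, which is exactly the claimed statement and in particular shows each \texttt{RSG($\alpha$)} is uniformly ergodic. I expect the main obstacle to be the \emph{uniformity}: the whole point is to produce a single triple $(m,\tilde\delta,\nu)$ valid simultaneously for all $\alpha \in \Y$, and the step that makes this possible is the passage from the one-step domination $P_\alpha \geq \varepsilon P_\beta$ to the $m$-step domination $P_\alpha^m \geq \varepsilon^m P_\beta^m$ via the nonnegative-kernel expansion. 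The cushion $\varepsilon > 0$ in the definition of $\Y$ is essential here; without it (if some $\alpha_i$ could approach $0$) the domination constant would degenerate and uniformity would fail.
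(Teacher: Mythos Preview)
Your proposal is correct and follows essentially the same route as the paper: characterise uniform ergodicity of \texttt{RSG($\beta$)} by an $m$-step minorization, transfer it to every $P_\alpha$ via a one-step kernel domination $P_\alpha \geq c\,P_\beta$ (hence $P_\alpha^m \geq c^m P_\beta^m$), and conclude with the standard Doeblin bound. The only cosmetic difference is that the paper takes $c = r := \min_i \alpha_i/\beta_i \geq \varepsilon/(1-(d-1)\varepsilon)$ and writes the remainder as another Gibbs kernel $(1-r)P_q$, whereas you take the simpler (slightly smaller) constant $c = \varepsilon$ and leave the remainder as a generic nonnegative sub-probability kernel; neither choice affects the argument.
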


\begin{proof} Let $P_{\beta}$ be the transition kernel of \texttt{RSG($\beta$)}. It is well known that for uniformly ergodic Markov chains the whole state space $\stany$ is small (c.f. Theorem 5.2.1 and 5.2.4 in \cite{MeynTw} with their $\psi=\pi$). Thus there exists $s>0,$ a probability measure $\mu$ on $(\stany, \borel)$ and a positive integer $m,$ s.t. for every $x \in \stany,$ 
\begin{equation} \label{eqn_minorization}
P_{\beta}^m(x, \cdot) \geq s\mu(\cdot).
\end{equation}
Fix $\alpha \in \mathcal{Y}$ and let $$r:= \min_i \frac{\alpha_i}{\beta_i}.$$ Since $\beta \in \mathcal{Y},$ we have $1 \geq r \geq \frac{\varepsilon}{1-(d-1)\varepsilon} > 0$  and $P_{\alpha}$ can be written as a mixture of transition kernels of two random scan Gibbs samplers, namely $$ P_{\alpha} =  rP_{\beta} + (1-r)P_{q}, \qquad \textrm{where} \quad q= \frac{\alpha- r\beta}{1-r}.$$
This combined with (\ref{eqn_minorization}) implies 
\begin{eqnarray} \nonumber
 P_{\alpha}^m(x, \cdot) &\geq & r^mP_{\beta}^m(x, \cdot) \;  \geq \; r^ms\mu(\cdot) \\ \label{eqn_minorization_mixture_for_ranodm_Gibbs}  &\geq & \Big(\frac{\varepsilon}{1-(d-1)\varepsilon}\Big)^m s \mu(\cdot) \qquad \textrm{for every} \quad x \in \stany.  \qquad
\end{eqnarray}
By Theorem 8 of \cite{RobRos} condition (\ref{eqn_minorization_mixture_for_ranodm_Gibbs}) implies \begin{equation}\label{eqn_gibbs_unif}  \|P_{\alpha}^n(x, \cdot) - \pi(\cdot)\|_{TV} \leq \bigg(1- \Big(\frac{\varepsilon}{1-(d-1)\varepsilon}\Big)^m s \bigg)^{\lfloor n/m\rfloor} \qquad \textrm{for all} \quad x \in \stany.\end{equation}
Since the right hand side of (\ref{eqn_gibbs_unif}) does not depend on $\alpha,$ the claim follows.
\end{proof}

\begin{lemma}\label{lemma_uniformly_Lipshitz} Let $P_{\alpha}$ and $P_{\alpha'}$ be random scan Gibbs samplers using selection probabilities $\alpha, \alpha' \in \mathcal{Y}:=[\varepsilon, 1-(d-1)\varepsilon]^d$
for some $\varepsilon>0$. Then
\begin{equation}\label{eqn_lemma_dimin_adap_by_weights}
\|P_{\alpha}(x, \cdot) - P_{\alpha'}(x, \cdot)\|_{TV}  \leq \frac{|\alpha- \alpha'|}{\varepsilon + |\alpha- \alpha'|} \leq \frac{|\alpha- \alpha'|}{\varepsilon}.
\end{equation} \end{lemma}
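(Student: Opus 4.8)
The plan is to exploit the fact that $P_\alpha$ and $P_{\alpha'}$ are mixtures of the \emph{same} component kernels, differing only in their mixing weights, and then to reuse the mixture-decomposition device already used in the proof of Lemma~\ref{lemma_uniformly_uniformly}. For each coordinate $i$ let $P_i$ denote the Markov kernel that leaves $x_{-i}$ unchanged and redraws the $i$th coordinate from $\pi(\cdot\mid x_{-i})$. By the definition of \texttt{RSG($\alpha$)} in Algorithm~\ref{alg_random_Gibbs}, the full kernel factors as $P_\alpha=\sum_{i=1}^d\alpha_iP_i$, and likewise $P_{\alpha'}=\sum_{i=1}^d\alpha_i'P_i$ with the \emph{identical} components $P_i$. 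The naive termwise bound $\sum_i|\alpha_i-\alpha_i'|$ obtained from this is too lossy, so instead I would extract the common overlap.

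Concretely, I would set $\lambda:=\min_i\alpha_i'/\alpha_i$. Since $\alpha,\alpha'$ are probability vectors we have $\lambda\le1$, and since $\alpha_i\ge\varepsilon>0$ we have $\lambda>0$; the case $\lambda=1$ forces $\alpha=\alpha'$ (both sides vanishing), so assume $\lambda<1$. Then $q:=(\alpha'-\lambda\alpha)/(1-\lambda)$ has nonnegative entries summing to one, so it is a valid selection vector, and one checks directly that $P_{\alpha'}=\lambda P_\alpha+(1-\lambda)P_q$. Hence $P_\alpha-P_{\alpha'}=(1-\lambda)(P_\alpha-P_q)$, and since total variation distance between probability measures never exceeds one,
\[
\|P_\alpha(x,\cdot)-P_{\alpha'}(x,\cdot)\|_{TV}=(1-\lambda)\,\|P_\alpha(x,\cdot)-P_q(x,\cdot)\|_{TV}\le 1-\lambda=\max_i\frac{\alpha_i-\alpha_i'}{\alpha_i}.
\]

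It remains to convert this ratio bound into the stated form. I would let $i^{*}$ attain the maximum, so $\alpha_{i^{*}}-\alpha_{i^{*}}'>0$. Since $\alpha_{i^{*}}'\ge\varepsilon$, we have $\alpha_{i^{*}}\ge\varepsilon+(\alpha_{i^{*}}-\alpha_{i^{*}}')$, whence
\[
\frac{\alpha_{i^{*}}-\alpha_{i^{*}}'}{\alpha_{i^{*}}}\le\frac{\alpha_{i^{*}}-\alpha_{i^{*}}'}{\varepsilon+(\alpha_{i^{*}}-\alpha_{i^{*}}')}.
\]
Because $t\mapsto t/(\varepsilon+t)$ is increasing on $[0,\infty)$ and $\alpha_{i^{*}}-\alpha_{i^{*}}'\le|\alpha-\alpha'|$, the right-hand side is at most $|\alpha-\alpha'|/(\varepsilon+|\alpha-\alpha'|)$, and dropping the extra summand in the denominator gives the cruder bound $|\alpha-\alpha'|/\varepsilon$.

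The decomposition $P_{\alpha'}=\lambda P_\alpha+(1-\lambda)P_q$ is the crux that yields the sharp $\max_i$ form rather than a sum. The only point requiring real care is producing the \emph{tighter} denominator $\varepsilon+|\alpha-\alpha'|$ in place of merely $\varepsilon$: this rests precisely on the observation $\alpha_{i^{*}}\ge\varepsilon+(\alpha_{i^{*}}-\alpha_{i^{*}}')$ combined with the monotonicity of $t/(\varepsilon+t)$, after which the remaining estimates are routine.
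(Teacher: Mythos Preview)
Your proof is correct and is essentially the same as the paper's: you use the identical mixture decomposition $P_{\alpha'}=\lambda P_\alpha+(1-\lambda)P_q$ with $\lambda=\min_i\alpha_i'/\alpha_i$, bound the total variation distance by $1-\lambda$, and then convert this into the stated form via $\alpha_{i^*}'\ge\varepsilon$. The only difference is that you spell out the monotonicity of $t\mapsto t/(\varepsilon+t)$ and the inequality $\alpha_{i^*}\ge\varepsilon+(\alpha_{i^*}-\alpha_{i^*}')$ explicitly, whereas the paper compresses these into the single line $r\ge\varepsilon/(\varepsilon+\max_i|\alpha_i-\alpha_i'|)\ge\varepsilon/(\varepsilon+\delta)$.
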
 

\begin{proof}
Let $\delta:= |\alpha - \alpha'|.$ Then $r:=\min_i\frac{\alpha'_i}{\alpha_i} \geq \frac{\varepsilon}{\varepsilon+\max_i|\alpha_i - \alpha'_i|} \geq \frac{\varepsilon}{\varepsilon+\delta}$ and reasoning as in the proof of Lemma~\ref{lemma_uniformly_uniformly} we can write $P_{\alpha'} = rP_{\alpha} + (1-r)P_q$ for some $q$ and compute
\begin{eqnarray}\nonumber \|P_{\alpha}(x, \cdot) - P_{\alpha'}(x, \cdot)\|_{TV} & = & \|(rP_{\alpha} + (1-r)P_{\alpha}) - (rP_{\alpha} + (1-r)P_q)\|_{TV} \\ \nonumber & = & (1-r) \|P_{\alpha} - P_q\|_{TV} \leq \frac{\delta}{\varepsilon+\delta},
 \end{eqnarray} as claimed. \end{proof}

\begin{cor}
$P_{\alpha}(x, B)$ as a function of $\alpha$ on $\mathcal{Y}$ is Lipshitz with Lipshitz constant $1/\varepsilon$ for every fixed set $B \in \borel.$ 
\end{cor}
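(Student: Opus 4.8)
The corollary to be proved states:
\emph{$P_{\alpha}(x, B)$ as a function of $\alpha$ on $\mathcal{Y}$ is Lipschitz with Lipschitz constant $1/\varepsilon$ for every fixed set $B \in \borel$.}

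The plan is to deduce this immediately from Lemma~\ref{lemma_uniformly_Lipshitz}, which has just been established. First I would fix an arbitrary set $B \in \borel$ and two selection vectors $\alpha, \alpha' \in \mathcal{Y}$. The key observation is that the total variation distance between two probability measures dominates the difference of the masses they assign to any single measurable set: for any probability measures $\nu, \mu$ and any $B \in \borel$,
\begin{equation*}
|\nu(B) - \mu(B)| \leq \|\nu - \mu\|_{TV}.
\end{equation*}
Applying this with $\nu = P_{\alpha}(x, \cdot)$ and $\mu = P_{\alpha'}(x, \cdot)$ gives
\begin{equation*}
|P_{\alpha}(x, B) - P_{\alpha'}(x, B)| \leq \|P_{\alpha}(x, \cdot) - P_{\alpha'}(x, \cdot)\|_{TV} \leq \frac{|\alpha - \alpha'|}{\varepsilon},
\end{equation*}
where the final inequality is exactly the bound~(\ref{eqn_lemma_dimin_adap_by_weights}) from Lemma~\ref{lemma_uniformly_Lipshitz}. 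Since $x$ and $B$ were arbitrary, this establishes the claimed Lipschitz property with constant $1/\varepsilon$.

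There is no substantial obstacle here, as the statement is essentially a restatement of Lemma~\ref{lemma_uniformly_Lipshitz} specialised to the evaluation of the kernels on a fixed set. The only point worth flagging is a minor discrepancy in how $\mathcal{Y}$ is written: Lemma~\ref{lemma_uniformly_Lipshitz} uses $\mathcal{Y} := [\varepsilon, 1-(d-1)\varepsilon]^d$, whereas the definition~(\ref{Ydef}) uses $[\varepsilon, 1]^d \cap \Delta_{d-1}$; these describe the same set once the simplex constraint is imposed, so the bound transfers without modification. I would therefore present the corollary as a direct consequence, noting only the elementary fact that $|\nu(B)-\mu(B)| \leq \|\nu-\mu\|_{TV}$ to bridge from the total variation bound to the pointwise-in-$B$ statement.
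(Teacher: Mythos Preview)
Your proposal is correct and matches the paper's approach: the paper states this corollary immediately after Lemma~\ref{lemma_uniformly_Lipshitz} with no separate proof, treating it as a direct consequence. Your added observation that $|\nu(B)-\mu(B)|\le\|\nu-\mu\|_{TV}$ is exactly the trivial bridge the paper leaves implicit.
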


\begin{cor} \label{cor_diminish_adap}
If $|\alpha_n-\alpha_{n-1}| \to 0$ in probability, then also $\sup_{x \in \stany}\|P_{\alpha_n}(x, \cdot) - P_{\alpha_{n-1}}(x, \cdot)\|_{TV} \to 0$ in probability.
\end{cor}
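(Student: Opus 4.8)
The plan is to deduce the corollary directly from Lemma~\ref{lemma_uniformly_Lipshitz}, whose central feature is that its bound is \emph{uniform in the starting state $x$}. Indeed, the estimate~(\ref{eqn_lemma_dimin_adap_by_weights}) holds for every $x \in \stany$, and since $\alpha_n$ and $\alpha_{n-1}$ take values in $\mathcal{Y}$ (so all their coordinates are bounded below by $\varepsilon$), it applies pathwise, i.e.\ for every $\omega \in \Omega$, with $\alpha, \alpha'$ replaced by the random probability vectors $\alpha_n(\omega), \alpha_{n-1}(\omega)$.

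First I would take the supremum over $x$ on the left-hand side of~(\ref{eqn_lemma_dimin_adap_by_weights}). Because the right-hand side $|\alpha_n - \alpha_{n-1}|/\varepsilon$ does not depend on $x$, passing to the supremum costs nothing, and one obtains the pathwise inequality
\begin{equation*}
\sup_{x \in \stany}\|P_{\alpha_n}(x, \cdot) - P_{\alpha_{n-1}}(x, \cdot)\|_{TV} \;\leq\; \frac{|\alpha_n - \alpha_{n-1}|}{\varepsilon}.
\end{equation*}
Writing $Y_n := \sup_{x \in \stany}\|P_{\alpha_n}(x, \cdot) - P_{\alpha_{n-1}}(x, \cdot)\|_{TV}$ and $Z_n := |\alpha_n - \alpha_{n-1}|/\varepsilon$, this reads $0 \leq Y_n \leq Z_n$ surely.

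Finally I would invoke the elementary domination fact for convergence in probability: for any $\eta > 0$ we have $\Pr(Y_n > \eta) \leq \Pr(Z_n > \eta)$. By hypothesis $|\alpha_n - \alpha_{n-1}| \to 0$ in probability, and $\varepsilon$ is a fixed positive constant, so $Z_n \to 0$ in probability and hence $\Pr(Z_n > \eta) \to 0$. Therefore $\Pr(Y_n > \eta) \to 0$ as well, which is precisely the assertion $Y_n \to 0$ in probability.

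There is essentially no hard step here: the substantive work has already been carried out in Lemma~\ref{lemma_uniformly_Lipshitz}, and the only two points requiring a modicum of care are (i) observing that the Lipschitz bound is uniform in $x$, so that taking the supremum is harmless, and (ii) treating $\alpha_n, \alpha_{n-1}$ as random variables and converting the deterministic domination into a statement about convergence in probability via the standard $\Pr(Y_n>\eta)\leq\Pr(Z_n>\eta)$ comparison.
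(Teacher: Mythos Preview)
Your proposal is correct and follows exactly the approach the paper intends: the corollary is stated without proof precisely because it is an immediate consequence of the uniform-in-$x$ bound~(\ref{eqn_lemma_dimin_adap_by_weights}) in Lemma~\ref{lemma_uniformly_Lipshitz}, combined with the elementary domination argument you spell out. There is nothing to add or change.
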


\begin{proof}[Proof of Theorem~\ref{thm_uniform_main}] We conclude the result from Theorem~1 of \cite{RobRos_JAP} that requires simultaneous uniform ergodicity and diminishing adaptation. Simultaneous uniform ergodicity results from combining assumption (b) and Lemma~\ref{lemma_uniformly_uniformly}. Diminishing adaptation results from assumption (a) with Corollary \ref{cor_diminish_adap}. Moreover note that Lemma~\ref{lemma_uniformly_uniformly} is uniform in $x_0$ and $\alpha_0$ and $(a')$ yields uniformly diminishing adaptation again by Corollary~\ref{cor_diminish_adap}. A look into the proof of Theorem~1 \cite{RobRos_JAP} reveals that this suffices for the uniform part of Theorem~\ref{thm_uniform_main}.\end{proof}

Finally, we note that verifying uniform ergodicity of a random scan Gibbs
sampler, as required by assumption $(b)$ of Theorem~\ref{thm_uniform_main},
may not be straightforward. Such issues have been investigated in e.g.\
\cite{RobertsPolson} and more recently in relation to the parametrization
of hierarchical models (see \cite{PapaGareth_parametr} and references
therein).  In the following proposition, we show that to verify uniform
ergodicity of any random scan Gibbs sampler, it suffices to verify
uniform ergodicity of the corresponding systematic scan Gibbs sampler
(which updates the coordinates $1, 2, \ldots, d$ in sequence rather than
select coordinates randomly).


\begin{prop}\label{prop_checking_unif_erg_for_G}
Let $\alpha \in \mathcal{Y}$ with $\Y$ as in~(\ref{Ydef}).
If the systematic scan Gibbs sampler is uniformly ergodic, then so
is \texttt{RSG($\alpha$)}.
\end{prop}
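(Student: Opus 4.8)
The plan is to exploit the fact that the random scan kernel, run for $d$ steps, contains the systematic scan as one of its constituent pieces, and then to transfer a whole-space minorization from the systematic scan to the random scan. Write $P^{(i)}$ for the Markov kernel that updates only the $i$-th coordinate from its full conditional $\pi(\cdot\mid x_{-i})$ and leaves the others fixed. Then the systematic scan kernel factors as $P_{\rm sys}=P^{(1)}P^{(2)}\cdots P^{(d)}$, while $P_{\alpha}=\sum_{i=1}^{d}\alpha_i P^{(i)}$. Both leave $\pi$ invariant, since each $P^{(i)}$ does.

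First I would expand $P_{\alpha}^{d}=\big(\sum_{i}\alpha_i P^{(i)}\big)^{d}$ into a sum of $d^{d}$ nonnegative terms, one for each sequence $(i_1,\dots,i_d)$, with weight $\alpha_{i_1}\cdots\alpha_{i_d}$ and kernel $P^{(i_1)}\cdots P^{(i_d)}$. Keeping only the single term corresponding to the ordered sequence $(1,2,\dots,d)$ and discarding the rest (which are nonnegative) yields the kernel domination
$$
P_{\alpha}^{d}(x,\cdot)\ \geq\ \Big(\prod_{i=1}^{d}\alpha_i\Big)\,P_{\rm sys}(x,\cdot)\ \geq\ \varepsilon^{d}\,P_{\rm sys}(x,\cdot),\qquad x\in\stany,
$$
valid for every $\alpha\in\mathcal{Y}$ since each $\alpha_i\geq\varepsilon$.

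Next, since $P_{\rm sys}$ is uniformly ergodic, the whole state space is small for $P_{\rm sys}$ (as in Lemma~\ref{lemma_uniformly_uniformly}, via Theorems 5.2.1 and 5.2.4 of \cite{MeynTw}): there are $m$, $s>0$ and a probability measure $\mu$ with $P_{\rm sys}^{m}(x,\cdot)\geq s\mu(\cdot)$ for all $x$. I would then iterate the domination above. Using that composition of nonnegative kernels preserves the ordering — if $K\geq cL$ as kernels then $K^{m}\geq c^{m}L^{m}$, checked by integrating $K(x,\cdot)\geq cL(x,\cdot)$ against the nonnegative function $y\mapsto L^{m-1}(y,A)$ and inducting on $m$ — one obtains
$$
P_{\alpha}^{dm}(x,\cdot)\ \geq\ \varepsilon^{dm}\,P_{\rm sys}^{m}(x,\cdot)\ \geq\ \varepsilon^{dm}s\,\mu(\cdot),\qquad x\in\stany.
$$
This is a whole-space minorization for $P_{\alpha}$, so by Theorem 8 of \cite{RobRos} (exactly as in the last step of the proof of Lemma~\ref{lemma_uniformly_uniformly}) the chain \texttt{RSG($\alpha$)} is uniformly ergodic, the limiting distribution being necessarily $\pi$ because $\pi$ is invariant under $P_{\alpha}$.

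The only delicate point is the iterated kernel domination, namely justifying that $P_{\alpha}^{d}\geq\varepsilon^{d}P_{\rm sys}$ may be composed with itself to give $P_{\alpha}^{dm}\geq\varepsilon^{dm}P_{\rm sys}^{m}$; this is where one must argue via positivity of the kernels rather than treat the inequality as one between numbers. Everything else is routine bookkeeping. (Alternatively, one could establish uniform ergodicity only for the uniform weights $\alpha=(1/d,\dots,1/d)$ and then invoke Lemma~\ref{lemma_uniformly_uniformly} to pass to general $\alpha\in\mathcal{Y}$, but the argument above already handles all $\alpha\in\mathcal{Y}$ at once.)
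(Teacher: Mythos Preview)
Your proof is correct and follows essentially the same route as the paper: both extract the systematic scan as one summand in the expansion of the $d$-fold (or $md$-fold) random scan kernel and then pass the whole-space minorization through. The only cosmetic difference is that the paper first treats the uniform weights $\alpha=(1/d,\dots,1/d)$ and then invokes Lemma~\ref{lemma_uniformly_uniformly} to reach general $\alpha\in\mathcal{Y}$, precisely the alternative you mention in your closing parenthetical.
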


\begin{proof} Let $$P = P_1P_2\cdots P_d$$ be the transition kernel of the uniformly ergodic systematic scan Gibbs sampler, where $P_i$ stands for the step that updates coordinate $i.$ By the minorisation condition characterisation, there exist $s>0,$ a probability measure $\mu$ on $(\stany, \borel)$ and a positive integer $m,$ s.t. for every $x \in \stany,$ 
\begin{equation} \nonumber
P^m(x, \cdot) \geq s\mu(\cdot).
\end{equation}
However, the probability that the random scan Gibbs sampler $P_{1/d}$ in its $md$ subsequent steps will update the coordinates in exactly the same order is $(1/d)^{md}>0.$ Therefore the following minorisation condition holds for the random scan Gibbs sampler. 
\begin{equation} \nonumber
P_{1/d}^{md}(x, \cdot) \geq (1/d)^{md}s\mu(\cdot).
\end{equation}
We conclude that \texttt{RSG($1/d$)} is uniformly ergodic,
and then by Lemma~\ref{lemma_uniformly_uniformly} it
follows that \texttt{RSG($\alpha$)} is uniformly ergodic for
any $\alpha\in\Y$.
\end{proof}

\section{Adaptive random scan Metropolis-within-Gibbs}
\label{sec_adap_MwG}


In this section we consider random scan Metropolis-within-Gibbs
sampler algorithms.  Thus, given $X_{n-1,-i}$, the $i$-th coordinate
$X_{n-1, i}$ is updated by a draw $Y$ from the proposal distribution
$Q_{X_{n-1,-i}}(X_{n-1, i}, \cdot)$ with the usual Metropolis acceptance
probability for the marginal stationary distribution $\pi(\cdot|
X_{n-1, -i})$.  Here, we consider Algorithm~\texttt{AdapRSMwG}, where
the proposal distributions $Q_{X_{n-1,-i}}(X_{n-1, i}, \cdot)$ remain
fixed, but the selection probabilities $\alpha_i$ are adapted on the fly.
We shall prove ergodicity of such algorithms under some circumstances.
(The more general algorithm \texttt{AdapRSadapMwG} is then considered
in the following section.)

To continue, let $P_{x_{-i}}$ denote the resulting Metropolis transition
kernel for obtaining $X_{n, i}|X_{n-1, i}$ given $X_{n-1, -i} = x_{-i}$.
We shall require the following assumption.


\begin{ass}\label{assu_Metrop_steps_uniformly}
For every $i \in \{1, \dots, d\}$ the transition kernel $P_{x_{-i}}$ is uniformly ergodic for every $x_{-i} \in \stany_{-i}.$ Moreover there exist $s_i >0$ and an integer $m_i$ s.t. for every  $x_{-i} \in \stany_{-i}$ there exists a probability measure $\nu_{x_{-i}}$ on $(\stany_i,  \mathcal{B}(\stany_i)),$ s.t. $$P_{x_{-i}}^{m_i}(x_i, \cdot) \geq s_i \nu_{x_{-i}}(\cdot) \qquad \textrm{for every} \quad x_i \in \stany_i.$$
\end{ass}

We have the following counterpart of Theorem \ref{thm_uniform_main}.

\begin{thm}\label{thm_unif_MwG}
Let $\alpha_n \in \mathcal{Y}$ for all $n$,
with $\Y$ as in~(\ref{Ydef}).
Assume that
\begin{itemize}
\item[(a)] $|\alpha_n - \alpha_{n-1}| \to 0$ in probability for fixed starting values $x_0 \in \stany$ and $\alpha_0 \in \mathcal{Y}.$
\item[(b)] there exists $\beta \in \mathcal{Y}$ s.t. \texttt{RSG($\beta$)} is uniformly ergodic.
\item[(c)] Assumption~\ref{assu_Metrop_steps_uniformly} holds.
\end{itemize}
Then \texttt{AdapRSMwG} is ergodic, i.e.\
\begin{equation}\label{eqn_thm_unif_MwG_1} T(x_0, \alpha_0, n) \to 0
\qquad \textrm{as} \quad  n \to \infty.
\end{equation}
Moreover, if \begin{itemize} \item[(a')] $\sup_{x_0, \alpha_0}|\alpha_n - \alpha_{n-1}| \to 0 \quad \textrm{in probability,}$\end{itemize} then convergence of \texttt{AdapRSMwG} is also uniform over all $x_0, \alpha_0,$ i.e.\
\begin{equation}
\label{eqn_thm_unif_MwG_2} \sup_{x_0, \alpha_0}T(x_0, \alpha_0, n) \to 0\qquad \textrm{as} \quad  n \to \infty.
\end{equation}
\end{thm}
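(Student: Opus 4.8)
The plan is to deduce ergodicity from Theorem~1 of \cite{RobRos_JAP}, exactly as in the proof of Theorem~\ref{thm_uniform_main}. Writing $\tilde P_\alpha=\sum_{i=1}^d\alpha_i\tilde P_i$, where $\tilde P_i$ is the kernel that updates only coordinate $i$ through the Metropolis step $P_{x_{-i}}$ and leaves $x_{-i}$ fixed, it suffices to verify \emph{diminishing adaptation} and \emph{simultaneous uniform ergodicity} of the family $\{\tilde P_\alpha\}_{\alpha\in\Y}$. Diminishing adaptation is immediate: since $\tilde P_\alpha$ is linear in $\alpha$, the proof of Lemma~\ref{lemma_uniformly_Lipshitz} applies verbatim (it used only the mixture representation $P_{\alpha'}=rP_\alpha+(1-r)P_q$), so $\|\tilde P_{\alpha}(x,\cdot)-\tilde P_{\alpha'}(x,\cdot)\|_{TV}\le|\alpha-\alpha'|/\varepsilon$ and Corollary~\ref{cor_diminish_adap} gives diminishing adaptation under (a), uniformly under (a$'$). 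The whole difficulty is therefore simultaneous uniform ergodicity, i.e.\ a bound $\sup_{\alpha\in\Y}\sup_{x}\|\tilde P_\alpha^{n}(x,\cdot)-\pi\|_{TV}\to0$.

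First I would record two consequences of the hypotheses. One checks directly that $\pi$ is stationary for each $\tilde P_i$, hence for every $\tilde P_\alpha$. By Assumption~\ref{assu_Metrop_steps_uniformly}, the minorization for $P_{x_{-i}}$ yields $\|P_{x_{-i}}^{k}(x_i,\cdot)-\pi(\cdot|x_{-i})\|_{TV}\le(1-s_i)^{\lfloor k/m_i\rfloor}$ uniformly in $x_i$ \emph{and} $x_{-i}$; since $\tilde P_i^k$ freezes $x_{-i}$, this reads $\sup_x\|\tilde P_i^{k}(x,\cdot)-G_i(x,\cdot)\|_{TV}\le(1-s_i)^{\lfloor k/m_i\rfloor}=:\epsilon_i(k)\to0$, where $G_i$ denotes the exact full-conditional update of coordinate $i$. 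Define the ``batched'' kernel $\bar P:=\frac1d\sum_i\tilde P_i^{k}$, which approximates the uniform full Gibbs kernel $P_{1/d}=\frac1d\sum_iG_i$: a telescoping estimate gives $\sup_x\|\bar P^{M}(x,\cdot)-P_{1/d}^{M}(x,\cdot)\|_{TV}\le M\max_i\epsilon_i(k)$. By assumption (b) and Lemma~\ref{lemma_uniformly_uniformly}, $P_{1/d}$ is uniformly ergodic, so $\sup_x\|P_{1/d}^{M}(x,\cdot)-\pi\|_{TV}=:g(M)\to0$; choosing $M$ then $k$ large makes $e:=g(M)+M\max_i\epsilon_i(k)<\half$, whence $\sup_x\|\bar P^{M}(x,\cdot)-\pi\|_{TV}\le e$.

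It remains to pass from the batched kernel $\bar P$ to the genuine Metropolis-within-Gibbs kernel $\tilde P_{1/d}$. Here I would use the elementary lower bound obtained by forcing the random scan to select coordinates in constant blocks of length $k$: since $\tilde P_{1/d}^{k}\ge d^{-(k-1)}\bar P$ as sub-probability kernels, monotonicity of composition gives $\tilde P_{1/d}^{Mk}\ge d^{-(k-1)M}\bar P^{M}$. Because total-variation approximation does \emph{not} preserve a minorization, I avoid arguing through a small set and instead bound the Dobrushin coefficient $\delta(\cdot):=\sup_{x,y}\|\cdot(x,\cdot)-\cdot(y,\cdot)\|_{TV}$ directly. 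Using $\tilde P_{1/d}^{Mk}(x,\cdot)\ge d^{-(k-1)M}\bar P^{M}(x,\cdot)$ and the overlap identity, together with $\|\bar P^{M}(x,\cdot)-\bar P^{M}(y,\cdot)\|_{TV}\le2e$, one gets $\delta(\tilde P_{1/d}^{Mk})\le1-d^{-(k-1)M}(1-2e)<1$. Submultiplicativity of $\delta$ then yields uniform ergodicity of $\tilde P_{1/d}$. Finally, for $\alpha\in\Y$ we have $\alpha_i\ge\varepsilon$, so $\tilde P_\alpha\ge d\varepsilon\,\tilde P_{1/d}$ and hence $\tilde P_\alpha^{N}\ge(d\varepsilon)^{N}\tilde P_{1/d}^{N}$ with $N:=Mk$; the same overlap computation gives $\delta(\tilde P_\alpha^{N})\le1-(d\varepsilon)^{N}d^{-(k-1)M}(1-2e)<1$ \emph{uniformly in} $\alpha$. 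This is exactly simultaneous uniform ergodicity, and combining it with diminishing adaptation in Theorem~1 of \cite{RobRos_JAP} proves \eqref{eqn_thm_unif_MwG_1}; since every estimate above is uniform in $x_0$ and $\alpha_0$, hypothesis (a$'$) upgrades the conclusion to \eqref{eqn_thm_unif_MwG_2}.

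The main obstacle is the step relating $\tilde P_{1/d}$ to $\bar P$: a single Metropolis step is not a full-conditional draw, so although $\tilde P_i^{k}$ approximates $G_i$ in total variation, this approximation cannot be converted into a small-set minorization for the Metropolis-within-Gibbs kernel (the approximating measure may be deficient on small sets). Working with the Dobrushin coefficient and the overlap between near-stationary measures circumvents this, at the cost of the very small constant $d^{-(k-1)M}$ coming from the probability that the random scan realises the required blocked selection pattern; this constant makes the contraction slow but keeps it strictly below $1$, which is all that simultaneous uniform ergodicity requires.
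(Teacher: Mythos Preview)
Your argument is correct, and your handling of diminishing adaptation matches the paper exactly. The route to simultaneous uniform ergodicity, however, is genuinely different from the paper's.

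The paper does not build a batched approximation of the Gibbs kernel. Instead it upgrades the coordinatewise minorizations of Assumption~\ref{assu_Metrop_steps_uniformly} to \emph{strong} uniform ergodicity (minorization by the conditional target itself) via reversibility of the Metropolis step, using Lemma~\ref{lem_simultaneous_strong_unif_erg}. Separately, it upgrades condition~(b) to simultaneous strong uniform ergodicity of the full Gibbs family $\{\texttt{RSG}(\alpha)\}_{\alpha\in\Y}$ via Lemma~\ref{lem_RSMwG_uniform}. These two pieces are then combined by quoting Theorem~2 of \cite{RobRos_hybrid_AAP}, a black-box result on hybrid samplers, which directly yields $(m_*,s_*)$-simultaneous strong uniform ergodicity of the whole family $\{\texttt{RSMwG}(\alpha)\}_{\alpha\in\Y}$.

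Your approach trades that external hybrid-sampler theorem for an elementary coupling/contraction argument: you force the random scan to repeat a coordinate $k$ times so that $\tilde P_i^k$ is total-variation close to $G_i$, pass to the Dobrushin coefficient via the overlap bound, and then transfer to general $\alpha\in\Y$ by the mixture domination $\tilde P_\alpha\ge d\varepsilon\,\tilde P_{1/d}$. This is entirely self-contained and, incidentally, never uses reversibility of the coordinate updates---only their uniform convergence to the full conditionals. The price is the tiny contraction constant $(d\varepsilon)^{Mk}d^{-(k-1)M}(1-2e)$, whereas the paper's route inherits the (typically much better) constants produced by \cite{RobRos_hybrid_AAP}. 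For the purposes of Theorem~\ref{thm_unif_MwG} either suffices, since Theorem~1 of \cite{RobRos_JAP} only needs \emph{some} uniform-in-$\alpha$ contraction.
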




\begin{remark}\label{rem_after_thm_unif_MwG}~
Remarks~\ref{rem_after_thm_unif}.1--\ref{rem_after_thm_unif}.3 still apply.
Also,
assumption~\ref{assu_Metrop_steps_uniformly} can easily be verified in some cases of interest, e.g. \begin{enumerate} 
\item  Independence samplers are essentially uniformly ergodic if and only if the candidate density is bounded below by a multiple of the stationary density, i.e. $q(\d x) \geq s \pi(\d x)$ for some $s>0,$ c.f. \cite{MengersenTweedie}. 
\item The Metropolis-Hastings algorithm with continuous and positive proposal density $q(\cdot, \cdot)$ and bounded target density $\pi$ is uniformly ergodic if the state space is compact, c.f. \cite{MeynTw, RobRos}. 
\end{enumerate}
\end{remark}

To prove Theorem~\ref{thm_unif_MwG} we build on the approach of \cite{RobRos_hybrid_AAP}. In particular recall the following notion of strong uniform ergodicity.
\begin{defi}
We say that a transition kernel $P$ on $\stany$ with stationary distribution $\pi$ is $(m, s)-$\emph{strongly uniformly ergodic}, if for some $s> 0$ and positive integer $m$ $$P^m(x, \cdot) \geq s \pi(\cdot) \qquad \textrm{for every} \quad x \in \stany.$$ 
Moreover, we will say that a family of Markov chains $\big\{P_{\gamma}\big\}_{\gamma \in \Gamma}$ on $\stany$ with stationary distribution $\pi$ is $(m,s)-$\emph{simultaneously strongly uniformly ergodic}, if for some $s> 0$ and positive integer $m$ $$P_{\gamma}^m(x, \cdot) \geq s \pi(\cdot) \qquad \textrm{for every} \quad x \in \stany \quad \textrm{and} \quad \gamma \in \Gamma.$$  
\end{defi}
By Proposition~1 in \cite{RobRos_hybrid_AAP}, if a Markov chain is both uniformly ergodic and reversible, then it is strongly uniformly ergodic. The following lemma improves over this result by controlling both involved parameters.
\begin{lemma}\label{lem_simultaneous_strong_unif_erg}
Let $\mu$ be a probability measure on $\stany$, let $m$ be a positive integer and let $s>0.$ If a reversible transition kernel $P$ satisfies the condition $$P^m(x, \cdot) \geq  s\mu(\cdot) \qquad \textrm{for every} \quad x \in \stany,$$ then it is $\left(\left(\left\lfloor \frac{\log(s/4)}{\log(1-s)}\right\rfloor +2\right)m, \frac{s^2}{8}\right)-$strongly uniformly ergodic.
\end{lemma}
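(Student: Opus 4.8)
The plan is to use reversibility to upgrade the given minorization by the initially arbitrary probability measure $\mu$ into a minorization by the stationary measure $\pi$ itself. In fact I would aim for the sharper statement $P^{2m}(x,\cdot)\ge s^2\pi$ for every $x\in\stany$, i.e. $(2m,s^2)$-strong uniform ergodicity; since such a bound propagates to all later times, it immediately implies the stated $(Nm,\tfrac{s^2}{8})$-strong uniform ergodicity with $N=\lfloor \log(s/4)/\log(1-s)\rfloor+2$.

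First I would record that $\mu$ is dominated by $\pi$. Integrating the hypothesis $P^m(x,\cdot)\ge s\mu(\cdot)$ against $\pi$ and using stationarity, $\pi(\cdot)=\int_{\stany}\pi(\d x)P^m(x,\cdot)\ge s\mu(\cdot)$, so $\mu\ll\pi$ and the density $g:=\d\mu/\d\pi$ is well defined with $0\le g\le 1/s$ and $\int_{\stany}g\,\d\pi=1$. The key step is then a reversibility identity: since $P$ is reversible for $\pi$ so is $P^m$, and writing $\mu=g\pi$ gives, for every measurable $A$,
\[
(\mu P^m)(A)=\int_{\stany}g(x)\,P^m(x,A)\,\pi(\d x)=\int_A\Big(\int_{\stany}g(x)\,P^m(y,\d x)\Big)\pi(\d y)=\int_A (P^m g)\,\d\pi,
\]
where the middle equality is reversibility of $P^m$ applied to the integrand $g(x)\Ind_A(y)$. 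Hence $\d(\mu P^m)/\d\pi=P^m g$.

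Next I would bound $P^m g$ pointwise from below. Using $P^m(y,\cdot)\ge s\mu$, then $\mu=g\pi$, then Cauchy--Schwarz together with $\int_{\stany} g\,\d\pi=1$,
\[
(P^m g)(y)=\int_{\stany}g\,P^m(y,\d x)\ \ge\ s\int_{\stany}g\,\d\mu=s\int_{\stany}g^2\,\d\pi\ \ge\ s\Big(\int_{\stany}g\,\d\pi\Big)^2=s .
\]
Therefore $\mu P^m\ge s\pi$, and composing once more,
\[
P^{2m}(x,\cdot)=\int_{\stany}P^m(x,\d y)\,P^m(y,\cdot)\ \ge\ s\int_{\stany}\mu(\d y)\,P^m(y,\cdot)=s\,(\mu P^m)\ \ge\ s^2\,\pi
\]
for every $x$. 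Finally, because $\pi$ is stationary, $P^{km}(x,\cdot)=\int_{\stany} P^{2m}(x,\d y)\,P^{(k-2)m}(y,\cdot)\ge s^2\,\pi P^{(k-2)m}=s^2\pi$ for every $k\ge 2$; since $N\ge 2$ and $s^2/8\le s^2$, the stated bound follows.

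The only substantive point is the reversibility identity, which converts the a priori uninformative minorizing measure $\mu$ into a density $P^m g\ge s$ against $\pi$; everything else ($\mu\ll\pi$, boundedness of $g$, and integrating minorizations against nonnegative functions) is routine measure theory. I would expect the main thing to get right to be the bookkeeping in that identity, and I note that this route in fact yields strictly better constants than those stated, the stated weaker bound being a harmless consequence.
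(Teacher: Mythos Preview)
Your proof is correct and yields the sharper bound $P^{2m}(x,\cdot)\ge s^2\pi$, from which the stated lemma follows immediately. The route, however, differs from the paper's. Both arguments start from the same reversibility consequence --- essentially $P^m(x,\d y)\ge s\,g(x)\,\pi(\d y)$ with $g=\d\mu/\d\pi$ --- but then diverge. The paper treats this pointwise: it isolates the set $A=\{g\ge 1/2\}$, shows $\pi(A)\ge s/2$, and then uses the uniform-ergodicity bound $\|P^{km}(x,\cdot)-\pi\|_{TV}\le(1-s)^k$ to ensure $P^{km}(x,A)\ge s/4$ once $k=\lfloor\log(s/4)/\log(1-s)\rfloor+1$; one further step of $P^m$ from within $A$ then gives $(s/4)(s/2)\pi=(s^2/8)\pi$. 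You instead integrate the pointwise bound against $\mu=g\pi$ and invoke Jensen ($\int g^2\,\d\pi\ge(\int g\,\d\pi)^2=1$) to obtain $\mu P^m\ge s\pi$ directly, so that a single additional $P^m$ suffices. Your approach avoids the convergence-rate detour entirely and produces strictly better constants (time $2m$ rather than roughly $m/s$, and mass $s^2$ rather than $s^2/8$); the paper's approach is more hands-on but pays for it in the exponent.
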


\begin{proof}
By Theorem 8 of \cite{RobRos} for every $A \in \borel$ we have \begin{equation}\nonumber
\|P^n(x, A) - \pi(A)\|_{TV} \leq (1-s)^{\lfloor n/m \rfloor},
\end{equation} And in particular \begin{equation}\label{eqn_lem_proof_TVdiff_small}
\|P^{km}(x, A) - \pi(A)\|_{TV} \leq s/4 \qquad \textrm{for} \quad k \geq \frac{\log(s/4)}{\log(1-s)}.
\end{equation} Since $\pi$ is stationary for $P,$ we have $\pi(\cdot) \geq s\mu(\cdot)$ and thus an upper bound for the Radon-Nikodym derivative \begin{equation} \label{eqn_RN_bound_mu_pi} \d \mu/\d \pi \leq 1/s.\end{equation} Moreover by reversibility $$ \pi(\d x) P^m(x, \d y) = \pi(\d y) P^m(y, \d x) \geq \pi(\d y)s \mu(\d x) $$ and consequently \begin{equation} \label{eqn_lower_for_P_m} P^m(x, \d y) \geq s\big(\mu (\d x)/\pi(\d x)\big) \pi(\d y).\end{equation}  Now define $$A := \{x \in \stany: \mu(\d x)/\pi(\d x) \geq 1/2\}$$ Clearly $\mu(A^c) \leq 1/2.$ Therefore by (\ref{eqn_RN_bound_mu_pi}) we have $$1/2 \leq \mu(A) \leq  (1/s)\pi(A)$$ and hence $\pi(A) \geq s/2.$ Moreover (\ref{eqn_lem_proof_TVdiff_small}) yields $$P^{km}(x, A) \geq s/4 \qquad \textrm{for} \quad k: = \left\lfloor \frac{\log(s/4)}{\log(1-s)}\right\rfloor +1.$$
And with $k$ defined above by (\ref{eqn_lower_for_P_m}) we have
\begin{eqnarray} \nonumber P^{km+m}(x, \cdot) &=& \int_{\stany} P^{km}(x, \d z)P^m(z, \cdot) \geq \int_{A} P^{km}(x, \d z)P^m(z, \cdot) \\ & \geq & \int_{A} P^{km}(x, \d z)(s/2) \pi(\cdot) \geq (s^2/8) \pi(\cdot).\nonumber\end{eqnarray}
This completes the proof. \end{proof}

We will need the following generalization of Lemma~\ref{lemma_uniformly_uniformly}.
\begin{lemma}\label{lem_RSMwG_uniform}
Let $\beta \in \mathcal{Y}$
with $\Y$ as in~(\ref{Ydef}).
If \texttt{RSG($\beta$)} is uniformly ergodic then there exist $s'>0$ and a positive integer $m'$ s.t. the family $\big\{\texttt{RSG($\alpha$)}\big\}_{\alpha \in \mathcal{Y}}$ is $(m', s')-$simultaneously strongly uniformly ergodic.
\end{lemma}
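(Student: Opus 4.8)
The plan is to combine the uniform minorisation that appears inside the proof of Lemma~\ref{lemma_uniformly_uniformly} with the reversibility-based upgrade provided by Lemma~\ref{lem_simultaneous_strong_unif_erg}. The crucial observation is that the minorisation coming from Lemma~\ref{lemma_uniformly_uniformly} is with respect to some auxiliary probability measure $\mu$ rather than $\pi$, and the entire purpose of Lemma~\ref{lem_simultaneous_strong_unif_erg} is precisely to convert such a $\mu$-minorisation into a $\pi$-minorisation, at the price of a controlled worsening of the constants.

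First I would extract from the proof of Lemma~\ref{lemma_uniformly_uniformly} a single minorisation holding uniformly over $\alpha \in \mathcal{Y}$. Since \texttt{RSG($\beta$)} is uniformly ergodic, the whole state space is small for $P_{\beta}$, so there exist $s>0$, a positive integer $m$, and a fixed probability measure $\mu$ with $P_{\beta}^m(x,\cdot) \geq s\mu(\cdot)$ for every $x$. The mixture decomposition $P_{\alpha} = rP_{\beta} + (1-r)P_q$ used there, together with the bound $r \geq \varepsilon/(1-(d-1)\varepsilon)$, then yields
\begin{equation*}
P_{\alpha}^m(x, \cdot) \ \geq \ s''\,\mu(\cdot) \qquad \textrm{for every } x \in \stany \textrm{ and every } \alpha \in \mathcal{Y},
\end{equation*}
where $s'' := \big(\varepsilon/(1-(d-1)\varepsilon)\big)^m\, s > 0$. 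Crucially, $m$, $s''$, and $\mu$ do not depend on $\alpha$, so this is a genuinely \emph{simultaneous} minorisation, albeit only relative to $\mu$.

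Next I would invoke reversibility. Each \texttt{RSG($\alpha$)} kernel is a convex combination $P_{\alpha} = \sum_i \alpha_i P_i$ of the coordinate-update kernels $P_i$, and each $P_i$ is reversible with respect to $\pi$ (updating coordinate $i$ from its full conditional $\pi(\cdot\,|\,x_{-i})$ satisfies detailed balance); hence each $P_{\alpha}$ is reversible with respect to $\pi$. This allows me to apply Lemma~\ref{lem_simultaneous_strong_unif_erg} to every $P_{\alpha}$ with the \emph{common} parameters $(m, s'')$, obtaining
\begin{equation*}
P_{\alpha}^{m'}(x, \cdot) \ \geq \ s'\, \pi(\cdot) \qquad \textrm{for every } x \in \stany \textrm{ and every } \alpha \in \mathcal{Y},
\end{equation*}
with $m' := \big(\lfloor \log(s''/4)/\log(1-s'')\rfloor + 2\big)m$ and $s' := (s'')^2/8$. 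Since these output constants depend solely on $m$ and $s''$, both uniform in $\alpha$, the family $\{\texttt{RSG($\alpha$)}\}_{\alpha \in \mathcal{Y}}$ is $(m', s')$-simultaneously strongly uniformly ergodic, as required.

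The only genuine obstacle here is conceptual rather than computational: one must recognise that the $\mu$-minorisation alone does not deliver strong uniform ergodicity (which demands a $\pi$-minorisation), and that reversibility together with Lemma~\ref{lem_simultaneous_strong_unif_erg} is exactly the bridge across this gap while keeping every constant independent of $\alpha$. Everything else is bookkeeping, namely verifying that the parameters $m$ and $s''$ carried over from Lemma~\ref{lemma_uniformly_uniformly} remain uniform over the simplex $\mathcal{Y}$.
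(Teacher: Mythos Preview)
Your argument is correct, but the paper takes a slightly different (and marginally shorter) route. The paper first applies the reversibility upgrade to $P_{\beta}$ alone: since \texttt{RSG($\beta$)} is uniformly ergodic and reversible, Proposition~1 of \cite{RobRos_hybrid_AAP} gives $P_{\beta}^{m}(x,\cdot)\geq s_{1}\pi(\cdot)$ directly. Only then does it invoke the mixture bound from Lemma~\ref{lemma_uniformly_uniformly}, obtaining $P_{\alpha}^{m}(x,\cdot)\geq s_{2}P_{\beta}^{m}(x,\cdot)\geq s_{1}s_{2}\,\pi(\cdot)$ for every $\alpha\in\mathcal{Y}$, so one may take $m'=m$ and $s'=s_{1}s_{2}$. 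In other words, the paper upgrades once (for $\beta$) and then transfers; you transfer first and then upgrade each $P_{\alpha}$ via Lemma~\ref{lem_simultaneous_strong_unif_erg}. Your version is entirely self-contained within the paper (it avoids the external citation) and makes the constants explicit, but pays for this with a larger $m'=\big(\lfloor\log(s''/4)/\log(1-s'')\rfloor+2\big)m$ instead of $m'=m$. Both orderings rely on the same two ingredients, so the difference is one of economy rather than substance.
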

\begin{proof} $P_{\beta}(x, \cdot)$ is uniformly ergodic and reversible, therefore by Proposition~1 in \cite{RobRos_hybrid_AAP} it is $(m,s_1)-$strongly uniformly ergodic for some $m$ and $s_1.$ Therefore, and arguing as in the proof of Lemma~\ref{lemma_uniformly_uniformly}, c.f. (\ref{eqn_minorization_mixture_for_ranodm_Gibbs}), there exist $s_2 \geq \big(\frac{\varepsilon}{1-(d-1)\varepsilon}\big)^m, $ s.t. for every $\alpha \in \mathcal{Y}$ and every $x \in \stany$
\begin{equation}
P_{\alpha}^m(x, \cdot) \geq s_2 P_{\beta}^m(x, \cdot)  \geq s_1 s_2 \pi(\cdot).
\end{equation}
Set $m'=m$ and $s' = s_1s_2.$
\end{proof}

\begin{proof}[Proof of Theorem~\ref{thm_unif_MwG}] We proceed as in the proof of Theorem~\ref{thm_uniform_main}, i.e. establish diminishing adaptation and simultaneous uniform ergodicity and conclude (\ref{eqn_thm_unif_MwG_1}) and (\ref{eqn_thm_unif_MwG_2}) from Theorem 1 of \cite{RobRos_JAP}. Observe that Lemma~\ref{lemma_uniformly_Lipshitz} applies for random scan Metropolis-within-Gibbs algorithms exactly the same way as for random scan Gibbs samplers. Thus diminishing adaptation results from assumption (a) and Corollary~\ref{cor_diminish_adap}. To establish simultaneous uniform ergodicity, observe that by Assumption~\ref{assu_Metrop_steps_uniformly} and Lemma~\ref{lem_simultaneous_strong_unif_erg} the Metropolis transition kernel for $i$th coordinate i.e. $P_{x_{-i}}$ has stationary distribution $\pi(\cdot|x_{-i})$ and is $\left(\left(\left\lfloor \frac{\log(s_i/4)}{\log(1-s_i)}\right\rfloor +2\right)m_i, \frac{s_i^2}{8}\right)-$strongly uniformly ergodic. Moreover by Lemma~\ref{lem_RSMwG_uniform} the family \texttt{RSG($\alpha$)}, $\alpha \in \mathcal{Y}$ is $(m', s')-$strongly uniformly ergodic, therefore by Theorem~2 of \cite{RobRos_hybrid_AAP} the family of random scan Metropolis-within-Gibbs samplers with selection probabilities $\alpha \in \mathcal{Y},$ \texttt{RSMwG($\alpha$)}, is $(m_*, s_*)-$simultaneously strongly uniformly ergodic with $m_*$ and $s_*$ given as in \cite{RobRos_hybrid_AAP}.   
\end{proof}

We close this section with the following alternative version of Theorem~\ref{thm_unif_MwG}.


\begin{thm}\label{thm_unif_MwG_alternative}
Let $\alpha_n \in \mathcal{Y}$ for all $n$,
with $\Y$ as in~(\ref{Ydef}).
Assume that \begin{itemize}
\item[(a)] $|\alpha_n - \alpha_{n-1}| \to 0$ in probability for fixed starting values $x_0 \in \stany$ and $\alpha_0 \in \mathcal{Y}.$
\item[(b)] there exists $\beta \in \mathcal{Y}$ s.t. \texttt{RSMwG($\beta$)} is uniformly ergodic.
\end{itemize}
Then \texttt{AdapRSMwG} is ergodic, i.e. \begin{equation}\label{eqn_thm_unif_MwG_alternative_1} T(x_0, \alpha_0, n) \to 0\qquad \textrm{as} \quad  n \to \infty. \end{equation} Moreover, if \begin{itemize} \item[(a')] $\sup_{x_0, \alpha_0}|\alpha_n - \alpha_{n-1}| \to 0 \quad \textrm{in probability,}$\end{itemize} then convergence of \texttt{AdapRSMwG} is also uniform over all $x_0, \alpha_0,$ i.e.  \begin{equation}\label{eqn_thm_unif_MwG_alternative_2} \sup_{x_0, \alpha_0}T(x_0, \alpha_0, n) \to 0\qquad \textrm{as} \quad  n \to \infty. \end{equation}
\end{thm}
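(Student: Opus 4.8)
The plan is to follow the same two-pronged strategy as in the proof of Theorem~\ref{thm_unif_MwG}: establish (i) diminishing adaptation and (ii) simultaneous uniform ergodicity of the family of Metropolis-within-Gibbs kernels indexed by $\alpha \in \mathcal{Y}$, and then conclude both~(\ref{eqn_thm_unif_MwG_alternative_1}) and~(\ref{eqn_thm_unif_MwG_alternative_2}) from Theorem~1 of~\cite{RobRos_JAP}. The crucial simplification relative to Theorem~\ref{thm_unif_MwG} is that assumption~(b) now posits uniform ergodicity of the \emph{composite} sampler \texttt{RSMwG($\beta$)} directly, so we can bypass Assumption~\ref{assu_Metrop_steps_uniformly} together with the strong-uniform-ergodicity machinery of Lemma~\ref{lem_simultaneous_strong_unif_erg} and Theorem~2 of~\cite{RobRos_hybrid_AAP}.

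For diminishing adaptation, I would first observe that the \texttt{RSMwG($\alpha$)} transition kernel, which I denote $P_\alpha$, is, just like the pure Gibbs kernel, a convex mixture $P_\alpha = \sum_{i=1}^d \alpha_i K_i$ of the coordinate-wise Metropolis updates $K_i$ (where $K_i$ leaves all but the $i$th coordinate fixed and performs a single Metropolis step on coordinate $i$), and in particular is \emph{linear} in $\alpha$. Consequently Lemma~\ref{lemma_uniformly_Lipshitz} applies verbatim with the $K_i$ in place of the full-conditional updates, giving $\|P_\alpha(x,\cdot)-P_{\alpha'}(x,\cdot)\|_{TV} \leq |\alpha-\alpha'|/\varepsilon$. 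Combining this with assumption~(a) and Corollary~\ref{cor_diminish_adap} yields diminishing adaptation, while the stronger hypothesis~(a') yields uniformly diminishing adaptation.

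For simultaneous uniform ergodicity, I would mirror the proof of Lemma~\ref{lemma_uniformly_uniformly} at the level of the composite kernels. Since \texttt{RSMwG($\beta$)} is uniformly ergodic, the whole state space $\stany$ is small for $P_\beta$, so there exist $s>0$, an integer $m$, and a probability measure $\mu$ with $P_\beta^m(x,\cdot) \geq s\mu(\cdot)$ for all $x \in \stany$. Using the mixture representation above, for any $\alpha \in \mathcal{Y}$ set $r := \min_i \alpha_i/\beta_i$; as in Lemma~\ref{lemma_uniformly_uniformly} one has $r \geq \varepsilon/(1-(d-1)\varepsilon) > 0$ and $P_\alpha = r P_\beta + (1-r)P_q$ with $q = (\alpha - r\beta)/(1-r) \in \Delta_{d-1}$. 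Hence $P_\alpha^m(x,\cdot) \geq r^m P_\beta^m(x,\cdot) \geq (\varepsilon/(1-(d-1)\varepsilon))^m s\,\mu(\cdot)$ for every $x$, a minorization whose constants are independent of $\alpha$. Theorem~8 of~\cite{RobRos} then bounds $\|P_\alpha^n(x,\cdot)-\pi\|_{TV}$ by a geometric rate independent of both $x$ and $\alpha$, which is exactly simultaneous uniform ergodicity, uniform also in $x_0$ and $\alpha_0$. Feeding diminishing adaptation and simultaneous uniform ergodicity into Theorem~1 of~\cite{RobRos_JAP} gives~(\ref{eqn_thm_unif_MwG_alternative_1}), and the uniform refinements give~(\ref{eqn_thm_unif_MwG_alternative_2}).

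I do not expect a genuine obstacle here: the entire argument rests on the single observation that the Metropolis-within-Gibbs kernel depends linearly, as a convex mixture of the fixed coordinate updates $K_i$, on the selection probabilities $\alpha$, exactly as the random scan Gibbs kernel does. Once this is noted, Lemmas~\ref{lemma_uniformly_Lipshitz} and~\ref{lemma_uniformly_uniformly} transfer without change, and the only routine checks are that $q$ lies in the simplex and that the identity $r\beta_i + (1-r)q_i = \alpha_i$ holds coordinate-wise. The price for this simpler proof, as the theorem's formulation implicitly reflects, is that assumption~(b) --- uniform ergodicity of the full composite \texttt{RSMwG($\beta$)} sampler --- may be harder to verify in practice than the coordinate-wise Assumption~\ref{assu_Metrop_steps_uniformly} used in Theorem~\ref{thm_unif_MwG}.
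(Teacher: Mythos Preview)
Your proposal is correct and follows essentially the same route as the paper's own proof, which tersely invokes Corollary~\ref{cor_diminish_adap} for diminishing adaptation, states that simultaneous uniform ergodicity follows as in the proof of Lemma~\ref{lemma_uniformly_uniformly}, and concludes via Theorem~1 of~\cite{RobRos_JAP}. You have simply spelled out in detail the key observation (already noted in the proof of Theorem~\ref{thm_unif_MwG}) that the \texttt{RSMwG($\alpha$)} kernel is a convex mixture $\sum_i \alpha_i K_i$ and hence linear in $\alpha$, which is precisely what makes Lemmas~\ref{lemma_uniformly_uniformly} and~\ref{lemma_uniformly_Lipshitz} transfer verbatim.
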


\begin{proof} Diminishing adaptation results from assumption (a) and Corollary~\ref{cor_diminish_adap}. Simultaneous uniform ergodicity can be established as in the proof of Lemma~\ref{lemma_uniformly_uniformly}. The claim follows from Theorem~1 of \cite{RobRos_JAP}.
\end{proof}

\begin{remark}
Whereas the statement of Theorem~\ref{thm_unif_MwG_alternative} may be useful in specific examples, typically condition (b), the uniform ergodicity of a random scan Metropolis-within-Gibbs sampler, will be not available and establishing it will involve conditions required by Theorem~\ref{thm_unif_MwG}. 
\end{remark}

\section{Adaptive random scan adaptive Metropolis-within-Gibbs}
\label{sec_adaptadapt}

In this section, we consider the adaptive random scan adaptive
Metropolis-within-Gibbs algorithm \texttt{AdapRSadapMwG}, that
updates both selection probabilities of the Gibbs kernel and proposal
distributions of the Metropolis step.  Thus, given $X_{n-1,-i}$, the
$i$-th coordinate $X_{n-1, i}$ is updated by a draw $Y$ from a proposal
distribution $Q_{X_{n-1,-i}, \;\gamma_{n,i}}(X_{n-1, i}, \cdot)$ with
the usual acceptance probability.
This doubly-adaptive algorithm has been used by e.g.~\cite{sylvia2}
for an application in statistical genetics.  As with adaptive
Metropolis algorithms, the adaption of the proposal distributions in
this setting is motivated by optimal scaling results for random walk
Metropolis algorithms \cite{RobertsGelmanGilks, RobRos_MALA, Bedard_aap,
Bedard_beyond, AtchRobRos_MCMCMC, RobRos_scaling_2001, RobRos, RobRos_ex,
Rosenthal_proposal}.

Let $P_{x_{-i},\; \gamma_{n,i}}$ denote the resulting
Metropolis transition kernel for obtaining $X_{n, i}|X_{n-1,
i}$ given $X_{n-1, -i} = x_{-i}.$ We will prove ergodicity
of this generalised algorithm using tools from the previous
section. Assumption~\ref{assu_Metrop_steps_uniformly} must be reformulated
accordingly, as follows.

\begin{ass}\label{assu_adap_Metrop_steps_uniformly}
For every $i \in \{1, \dots, d\},$ $x_{-i} \in \stany_{-i}$ and $\gamma_i \in \Gamma_i,$ the transition kernel $P_{x_{-i},\; \gamma_i}$  is uniformly ergodic. Moreover there exist $s_i >0$ and an integer $m_i$ s.t. for every  $x_{-i} \in \stany_{-i}$ and $\gamma_i \in \Gamma_i$ there exists a probability measure $\nu_{x_{-i}, \; \gamma_i}$ on $(\stany_i,  \mathcal{B}(\stany_i)),$ s.t. $$P_{x_{-i},\;\gamma_i}^{m_i}(x_i, \cdot) \geq s_i \nu_{x_{-i}, \; \gamma_i}(\cdot) \qquad \textrm{for every} \quad x_i \in \stany_i.$$
\end{ass}

 We have the following counterpart of Theorems \ref{thm_uniform_main} and \ref{thm_unif_MwG}.


\begin{thm}\label{thm_unif_aMwG}
Let $\alpha_n \in \mathcal{Y}$ for all $n$,
with $\Y$ as in~(\ref{Ydef}).
Assume that
\begin{itemize}
\item[(a)] $|\alpha_n - \alpha_{n-1}| \to 0$ in probability for fixed starting values $x_0 \in \stany$ and $\alpha_0 \in \mathcal{Y}.$
\item[(b)] there exists $\beta \in \mathcal{Y}$ s.t. \texttt{RSG($\beta$)} is uniformly ergodic.
\item[(c)] Assumption~\ref{assu_adap_Metrop_steps_uniformly} holds.
\item[(d)] The Metropolis-within-Gibbs kernels exhibit diminishing adaptation, i.e. for every $i \in \{1, \dots, d\}$ the $\mathcal{G}_{n+1}$ measurable random variable $$\sup_{x\in \mathcal{X}}\|P_{x_{-i},\; \gamma_{n+1,i}}(x_i, \cdot) - P_{x_{-i},\; \gamma_{n,i}}(x_i, \cdot)\|_{TV} \to 0 \textrm{ in probability, as } n \to \infty,$$ for fixed starting values $x_0 \in \stany$ and $\alpha_0 \in \mathcal{Y}.$ 
\end{itemize}
Then \texttt{AdapRSadapMwG} is ergodic, i.e. \begin{equation}\label{eqn_thm_unif_aMwG_1} T(x_0, \alpha_0, n) \to 0\qquad \textrm{as} \quad  n \to \infty. \end{equation} Moreover, if \begin{itemize} \item[(a')] $\sup_{x_0, \alpha_0}|\alpha_n - \alpha_{n-1}| \to 0 \quad \textrm{in probability,}$
\item[(d')] $\sup_{x_0, \alpha_0} \sup_{x\in \mathcal{X}}\|P_{x_{-i},\; \gamma_{n+1,i}}(x_i, \cdot) - P_{x_{-i},\; \gamma_{n,i}}(x_i, \cdot)\|_{TV} \to 0 \textrm{ in probability,}$
\end{itemize} then convergence of \texttt{AdapRSadapMwG} is also uniform over all $x_0, \alpha_0,$ i.e.  \begin{equation}\label{eqn_thm_unif_aMwG_2} \sup_{x_0, \alpha_0}T(x_0, \alpha_0, n) \to 0\qquad \textrm{as} \quad  n \to \infty. \end{equation}
\end{thm}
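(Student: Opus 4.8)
The plan is to verify the two hypotheses of Theorem~1 of \cite{RobRos_JAP} --- simultaneous uniform ergodicity and diminishing adaptation --- exactly as in the proofs of Theorems~\ref{thm_uniform_main} and~\ref{thm_unif_MwG}, the only genuinely new feature being the second source of adaptation coming from the proposal parameters $\gamma_n$. Throughout I would write $P_{\alpha,\gamma}$ for the one-step kernel of the random scan Metropolis-within-Gibbs sampler that uses selection probabilities $\alpha$ and proposal parameters $\gamma$, so that $P_n=P_{\alpha_n,\gamma_n}$ and $P_{\alpha,\gamma}(x,\cdot)=\sum_{i=1}^d\alpha_i P_{x_{-i},\gamma_i}(x_i,\cdot)$, the mixture over coordinates of the individual Metropolis updates.

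For diminishing adaptation I would split the one-step change through the intermediate pair $(\alpha_n,\gamma_{n+1})$ by the triangle inequality:
\[
\|P_{\alpha_{n+1},\gamma_{n+1}}(x,\cdot)-P_{\alpha_n,\gamma_n}(x,\cdot)\|_{TV}
\le \|P_{\alpha_{n+1},\gamma_{n+1}}(x,\cdot)-P_{\alpha_n,\gamma_{n+1}}(x,\cdot)\|_{TV}
+\|P_{\alpha_n,\gamma_{n+1}}(x,\cdot)-P_{\alpha_n,\gamma_n}(x,\cdot)\|_{TV}.
\]
The first term depends only on the mixture weights, so Lemma~\ref{lemma_uniformly_Lipshitz} applies verbatim and \emph{uniformly in} $\gamma$, giving the bound $|\alpha_{n+1}-\alpha_n|/\varepsilon$, which tends to $0$ in probability by assumption~(a) together with Corollary~\ref{cor_diminish_adap} (exactly as noted in the proof of Theorem~\ref{thm_unif_MwG}). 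The second term is at most $\sum_{i=1}^d\|P_{x_{-i},\gamma_{n+1,i}}(x_i,\cdot)-P_{x_{-i},\gamma_{n,i}}(x_i,\cdot)\|_{TV}$, a finite sum of $d$ terms each tending to $0$ in probability by assumption~(d); hence the whole expression tends to $0$ in probability, uniformly in $x$.

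For simultaneous uniform ergodicity I would argue precisely as in the proof of Theorem~\ref{thm_unif_MwG}. By Assumption~\ref{assu_adap_Metrop_steps_uniformly} each coordinate kernel $P_{x_{-i},\gamma_i}$ is reversible with respect to $\pi(\cdot|x_{-i})$ and satisfies a minorisation with constants $s_i,m_i$ that do not depend on $x_{-i}$ or $\gamma_i$; Lemma~\ref{lem_simultaneous_strong_unif_erg} then upgrades this to strong uniform ergodicity with parameters $\big(\lfloor\log(s_i/4)/\log(1-s_i)\rfloor+2\big)m_i$ and $s_i^2/8$ that are again uniform over $x_{-i}$ and $\gamma_i$. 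Combining assumption~(b) with Lemma~\ref{lem_RSMwG_uniform} gives $(m',s')$-simultaneous strong uniform ergodicity of the Gibbs selection family $\{\texttt{RSG}(\alpha)\}_{\alpha\in\mathcal{Y}}$, with constants independent of $\gamma$. Feeding these two $\gamma$-free uniform minorisations into Theorem~2 of \cite{RobRos_hybrid_AAP} then yields $(m_*,s_*)$-simultaneous strong uniform ergodicity of the whole family of Metropolis-within-Gibbs kernels over all $\alpha\in\mathcal{Y}$ and all admissible $\gamma$.

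With both ingredients established, Theorem~1 of \cite{RobRos_JAP} gives~(\ref{eqn_thm_unif_aMwG_1}). For the uniform conclusion~(\ref{eqn_thm_unif_aMwG_2}), the simultaneous ergodicity above is already uniform in all starting values, while assumptions~(a') and~(d') make both pieces of the diminishing-adaptation bound tend to $0$ in probability uniformly over $x_0,\alpha_0$ (and $\gamma_0$); inspecting the proof of Theorem~1 of \cite{RobRos_JAP}, exactly as for Theorem~\ref{thm_uniform_main}, shows this suffices. I expect the main obstacle to be bookkeeping of uniformity rather than any new estimate: the crux is to confirm that the $\alpha$-Lipschitz bound of Lemma~\ref{lemma_uniformly_Lipshitz} and the minorisation constants produced by Lemmas~\ref{lem_simultaneous_strong_unif_erg} and~\ref{lem_RSMwG_uniform} are all uniform in $\gamma$, so that the two adaptations genuinely decouple and can be controlled separately and then recombined by the triangle inequality.
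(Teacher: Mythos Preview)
Your proposal is correct and follows essentially the same route as the paper: both reduce to Theorem~1 of \cite{RobRos_JAP} by establishing simultaneous uniform ergodicity via Lemmas~\ref{lem_simultaneous_strong_unif_erg} and~\ref{lem_RSMwG_uniform} together with Theorem~2 of \cite{RobRos_hybrid_AAP}, and diminishing adaptation via a triangle-inequality split into an $\alpha$-piece handled by Lemma~\ref{lemma_uniformly_Lipshitz}/Corollary~\ref{cor_diminish_adap} and a $\gamma$-piece handled by assumption~(d). The only cosmetic difference is that the paper keeps the mixture weights $\alpha_{n-1,i}$ in the bound on the $\gamma$-piece, whereas you drop them, which is harmless.
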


\begin{remark}
Remarks~\ref{rem_after_thm_unif}.1--\ref{rem_after_thm_unif}.3 still
apply.  And, Remark~\ref{rem_after_thm_unif_MwG} applies for verifying
Assumption~\ref{assu_adap_Metrop_steps_uniformly}. Verifying condition
$(d)$ is discussed after the proof.
\end{remark}

\begin{proof} We again proceed by establishing diminishing adaptation and
simultaneous uniform ergodicity and concluding the result from Theorem
1 of \cite{RobRos_JAP}. To establish simultaneous uniform ergodicity
we proceed as in the proof of Theorem~ \ref{thm_unif_MwG}. Observe
that by Assumption~\ref{assu_adap_Metrop_steps_uniformly} and
Lemma~\ref{lem_simultaneous_strong_unif_erg} every adaptive
Metropolis transition kernel for $i$th coordinate i.e. $P_{x_{-i},\;
\gamma_i}$ has stationary distribution $\pi(\cdot|x_{-i})$ and is
$\left(\left(\left\lfloor \frac{\log(s_i/4)}{\log(1-s_i)}\right\rfloor
+2\right)m_i, \frac{s_i^2}{8}\right)-$strongly uniformly
ergodic. Moreover, by Lemma~\ref{lem_RSMwG_uniform} the family
\texttt{RSG($\alpha$)}, $\alpha \in \mathcal{Y}$ is $(m', s')-$strongly
uniformly ergodic, therefore by Theorem~2 of \cite{RobRos_hybrid_AAP}
the family of random scan Metropolis-within-Gibbs samplers with selection
probabilities $\alpha \in \mathcal{Y}$ and proposals indexed by $\gamma
\in \Gamma,$ is $(m_*, s_*)-$simultaneously strongly uniformly ergodic
with $m_*$ and $s_*$ given as in \cite{RobRos_hybrid_AAP}.

For diminishing adaptation we write
\begin{eqnarray}
&& \sup_{x\in \mathcal{X}}\|P_{\alpha_n,\;\gamma_n}(x, \cdot) - P_{\alpha_{n-1},\;\gamma_{n-1}}(x, \cdot) \|_{TV}  \;\; \leq \nonumber \\ 
&& \qquad \qquad \qquad \qquad \qquad \qquad \sup_{x\in \mathcal{X}}\|P_{\alpha_n,\;\gamma_n}(x, \cdot) - P_{\alpha_{n-1},\;\gamma_{n}}(x, \cdot) \|_{TV}  \nonumber \\ \nonumber && \qquad \qquad \qquad \qquad \qquad \qquad  
+ \; \sup_{x\in \mathcal{X}}\|P_{\alpha_{n-1},\;\gamma_n}(x, \cdot) - P_{\alpha_{n-1},\;\gamma_{n-1}}(x, \cdot) \|_{TV} 
\end{eqnarray}
The first term above converges to $0$ in probability by Corollary~\ref{cor_diminish_adap} and assumption (a). The second term
\begin{eqnarray}
&& \sup_{x\in \mathcal{X}}\|P_{\alpha_{n-1},\;\gamma_n}(x, \cdot) - P_{\alpha_{n-1},\;\gamma_{n-1}}(x, \cdot) \|_{TV} \;\; \leq \nonumber \\
&&\qquad \qquad \qquad \qquad  \sum_{i=1}^d \alpha_{n-1, i}\sup_{x\in \mathcal{X}}\|P_{x_{-i},\; \gamma_{n+1,i}}(x_i, \cdot) - P_{x_{-i},\; \gamma_{n,i}}(x_i, \cdot)\|_{TV} \nonumber 
\end{eqnarray}
converges to $0$ in probability as a mixture of terms that converge to $0$ in probability. \end{proof}

The following lemma can be used to verify assumption $(d)$ of
Theorem~\ref{thm_unif_aMwG}; see also Example~\ref{example_diminish}
below.

\begin{lemma} \label{lemma_diminish_Q_P}
Assume that the adaptive proposals exhibit diminishing adaptation i.e. for every $i \in \{1, \dots, d\}$ the $\mathcal{G}_{n+1}$ measurable random variable $$\sup_{x\in \mathcal{X}}\|Q_{x_{-i},\; \gamma_{n+1,i}}(x_i, \cdot) - Q_{x_{-i},\; \gamma_{n,i}}(x_i, \cdot)\|_{TV} \to 0 \textrm{ in probability, as } n \to \infty,$$ for fixed starting values $x_0 \in \stany$ and $\alpha_0 \in \mathcal{Y}.$

Then any of the following conditions
\begin{itemize}
\item[(i)] The Metropolis proposals have symmetric densities, i.e. $$q_{x_{-i},\; \gamma_{n,i}}(x_i, y_i)\; = \;  q_{x_{-i},\; \gamma_{n,i}}(y_i, x_i),$$
\item[(ii)] $\stany_i$ is compact for every $i$, $\pi$ is continuous, everywhere positive and bounded,
\end{itemize}
implies condition $(d)$ of Theorem~\ref{thm_unif_aMwG}.
\end{lemma}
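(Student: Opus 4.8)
The plan is to reduce everything to the assumed diminishing adaptation of the proposals by quantifying how the Metropolis acceptance probability (\ref{acceptprob}) depends on the adaptation parameter. Fix a coordinate $i$ and a value $x_{-i}$, abbreviate the conditional target by $\pi$, and write $q_\gamma(x_i,y_i)$ for $q_{x_{-i},\gamma}(x_i,y_i)$ and $a_\gamma$ for the corresponding acceptance probability. The kernel decomposes as $P_{x_{-i},\gamma}(x_i,\d y_i)=q_\gamma(x_i,y_i)a_\gamma(x_i,y_i)\,\d y_i+r_\gamma(x_i)\delta_{x_i}(\d y_i)$ with rejection mass $r_\gamma(x_i)=1-\int q_\gamma a_\gamma\,\d y_i$. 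Since the two kernels share the atom location $x_i$ and $|r_{\gamma'}(x_i)-r_\gamma(x_i)|\le\int|q_{\gamma'}a_{\gamma'}-q_\gamma a_\gamma|\,\d y_i$, I would first record the reduction
\[
\|P_{x_{-i},\gamma'}(x_i,\cdot)-P_{x_{-i},\gamma}(x_i,\cdot)\|_{TV}\le\int_{\stany_i}|q_{\gamma'}(x_i,y_i)a_{\gamma'}(x_i,y_i)-q_\gamma(x_i,y_i)a_\gamma(x_i,y_i)|\,\d y_i.
\]
The key algebraic step is that $q_\gamma(x_i,y_i)a_\gamma(x_i,y_i)=\min\big(q_\gamma(x_i,y_i),\ \tfrac{\pi(y_i)}{\pi(x_i)}q_\gamma(y_i,x_i)\big)$, so by $|\min(s,t)-\min(s',t')|\le|s-s'|+|t-t'|$ the integrand is at most $|q_{\gamma'}(x_i,y_i)-q_\gamma(x_i,y_i)|+\tfrac{\pi(y_i)}{\pi(x_i)}|q_{\gamma'}(y_i,x_i)-q_\gamma(y_i,x_i)|$. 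Integrating, the first (forward) term gives $2\|Q_{x_{-i},\gamma'}(x_i,\cdot)-Q_{x_{-i},\gamma}(x_i,\cdot)\|_{TV}$, which vanishes in probability uniformly in $x$ by the proposal diminishing adaptation hypothesis.

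Under condition $(i)$ the argument closes at once: symmetry $q_\gamma(x_i,y_i)=q_\gamma(y_i,x_i)$ reduces the acceptance ratio to $\pi(y_i)/\pi(x_i)$, which is free of $\gamma$, so $a_{\gamma'}=a_\gamma=:a$ and the accepted-density difference collapses to $\int a(x_i,y_i)|q_{\gamma'}(x_i,y_i)-q_\gamma(x_i,y_i)|\,\d y_i\le 2\|Q_{x_{-i},\gamma'}(x_i,\cdot)-Q_{x_{-i},\gamma}(x_i,\cdot)\|_{TV}$. Taking $\sup_{x\in\stany}$ and invoking the hypothesis yields condition $(d)$ directly, with no reverse-direction term at all.

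Under condition $(ii)$, compactness of $\stany_i$ together with continuity, positivity and boundedness of $\pi$ gives uniform two-sided bounds $0<b\le\pi(\cdot\,|\,x_{-i})\le B<\infty$ on every conditional target (uniformly in $x_{-i}$), so the weight $\pi(y_i)/\pi(x_i)$ in the reverse term is bounded by $B/b$. It then remains to control $\sup_{x}\int_{\stany_i}|q_{\gamma'}(y_i,x_i)-q_\gamma(y_i,x_i)|\,\d y_i$, the proposal difference taken in the transposed (reverse) direction.

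The hard part will be precisely this reverse-direction integral. The diminishing adaptation hypothesis controls only the forward (row) total variation $\sup_{x_i}\|Q_{\gamma'}(x_i,\cdot)-Q_\gamma(x_i,\cdot)\|_{TV}$, whereas the acceptance ratio depends on the reverse proposal $q(y_i,x_i)$ and so forces the transposed (column) integral into the bound; in general $L^1$ control of the rows of a transition density does not control its columns, which is exactly why a hypothesis beyond bare diminishing adaptation is needed. The route I would take to close the gap is to exploit the regularity in $(ii)$: on the compact product $\stany_i\times\stany_i$ a jointly continuous proposal family is uniformly controlled, so the proposal adaptation upgrades to $\sup_{x_i,y_i}|q_{\gamma'}(x_i,y_i)-q_\gamma(x_i,y_i)|\to0$ in probability, whence both the forward and the reverse integrals are at most $|\stany_i|$ times this sup-norm difference and vanish together. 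Combining the forward and reverse estimates then gives $\sup_{x}\|P_{x_{-i},\gamma'}(x_i,\cdot)-P_{x_{-i},\gamma}(x_i,\cdot)\|_{TV}\to0$ in probability, which is condition $(d)$, and I would finally note that the $\sup$ over $x_{-i}$ is harmless because the constants $b,B,|\stany_i|$ are uniform in $x_{-i}$ under $(ii)$.
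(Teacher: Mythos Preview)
For case~(i) your argument is exactly the paper's: with symmetric proposals the acceptance probability $a(x,y)=\min\{1,\pi(y)/\pi(x)\}$ is $\gamma$-free, and the bound $\|P_1(x,\cdot)-P_2(x,\cdot)\|_{TV}\le 2\|Q_1(x,\cdot)-Q_2(x,\cdot)\|_{TV}$ drops out immediately.

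For case~(ii) your decomposition $q_\gamma a_\gamma=\min\{q_\gamma(x,y),\tfrac{\pi(y)}{\pi(x)}q_\gamma(y,x)\}$ together with $|\min(a,b)-\min(c,d)|\le|a-c|+|b-d|$ is again precisely the paper's route; from there the paper simply records the bound $|P_1(x,A)-P_2(x,A)|\le 4(K+1)\|Q_1(x,\cdot)-Q_2(x,\cdot)\|_{TV}$ with $K=\sup_{x,y}\pi(y)/\pi(x)$ and stops. You are right to isolate the reverse-direction integral $\int|q_{\gamma'}(y,x)-q_\gamma(y,x)|\,\d y$ as the delicate point: it is not the forward total variation, and row-wise $L^1$ control of a transition density does not in general control its columns. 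The paper does not separate this term out or comment on it; it is absorbed into the final constant without discussion. So on this point your write-up is more scrupulous than the paper's. That said, your proposed fix---passing to sup-norm convergence of the proposal densities via joint continuity on the compact product $\stany_i\times\stany_i$---imports a regularity assumption on the proposal family that is not among the hypotheses of the lemma (condition~(ii) constrains only $\stany_i$ and $\pi$, not the $q$'s), so as written your argument for~(ii) proves a slightly weaker statement than the one claimed.
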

\begin{proof}
Let $P_1,$ $P_2$ denote transition kernels and $Q_1,$ $Q_2$ proposal kernels of two generic Metropolis algorithms for sampling from $\pi$ on arbitrary state space $\stany.$ To see that $(i)$ implies $(d)$ we check that \begin{equation}
\|P_1(x, \cdot)-P_2(x, \cdot)\|_{TV} \; \leq \; 2\|Q_1(x, \cdot)-Q_2(x, \cdot)\|_{TV} .\nonumber
\end{equation}
Indeed, the acceptance probability $$\alpha(x,y)=\min\Big\{1, \frac{\pi(y)}{\pi(x)}\Big\} \; \in [0,1]$$ does not depend on the proposal, and for any $x\in \stany$ and $A \in \borel$ we compute   \begin{eqnarray}
|P_1(x, A)-P_2(x, A)| & \leq & \left|\int_A \alpha(x,y)\big(q_1(y)-q_2(y)\big) \d y\right| \nonumber \\
&& \qquad + \;\mathbb{I}_{\{x \in A\}} \left|\int_{\mathcal{X}} \big(1-\alpha(x,y)\big)\big(q_1(y)-q_2(y)\big) \d y\right| \nonumber \\
& \leq & 2\|Q_1(x, \cdot)-Q_2(x, \cdot)\|_{TV} . \nonumber
\end{eqnarray}
Condition $(ii)$ implies that there exists $K < \infty,$ s.t. $\pi(y)/\pi(x) \leq K$ for every $x,y \in \stany.$ To conclude that $(d)$ results from $(ii)$ note that \begin{equation}|\min\{a,b\} -\min\{c,d\}| \; < \; |a-c| + |b-d|\label{eqn_abs_val}\end{equation}
and recall acceptance probabilities $\alpha_i(x,y)=\min\Big\{1, \frac{\pi(y)q_i(y,x)}{\pi(x)q_i(x,y)}\Big\}.$ Indeed for any $x\in \stany$ and $A \in \borel$ using (\ref{eqn_abs_val}) we have \begin{eqnarray}
|P_1(x, A)-P_2(x, A)| & \leq & \bigg| \int_A \bigg( \min\Big\{q_1(x,y), \frac{\pi(y)}{\pi(x)} q_1(y,x)\Big\} \nonumber \\ & & \qquad \qquad \qquad - \min\Big\{q_2(x,y), \frac{\pi(y)}{\pi(x)} q_2(y,x) \Big\}\bigg) \d y \bigg| \nonumber \\ &&
+\; \mathbb{I}_{\{x \in A\}} \bigg|\int_{\mathcal{X}} \Big(\big(1-\alpha_1(x,y)\big)q_1(x,y) \nonumber \\ && \qquad \qquad \qquad - \big(1-\alpha_2(x,y)\big)q_2(x,y)\Big) \d y\bigg| \nonumber \\
& \leq & 4(K+1) \|Q_1(x, \cdot)-Q_2(x, \cdot)\|_{TV} \nonumber
\end{eqnarray}
And the claim follows since a random scan Metropolis-within-Gibbs sampler is a mixture of Metropolis samplers.
\end{proof}

We now provide an example to show that diminishing adaptation of proposals
as in Lemma~\ref{lemma_diminish_Q_P} does not necessarily imply condition
$(d)$ of Theorem~\ref{thm_unif_aMwG}, so some additional assumption is
required, e.g.\ (i) or (ii) of Lemma~\ref{lemma_diminish_Q_P}.

\begin{example} \label{example_diminish}
Consider a sequence of Metropolis algorithms with transition kernels
$P_1, P_2, \dots$ designed for sampling from $\pi(k) = p^k(1-p)$ on
$\stany  = \{0,1,\dots\}.$ The transition kernel $P_n$ results from
using proposal kernel $Q_n$ and the standard acceptance rule, where
\begin{eqnarray} \nonumber
Q_n(j,k) & = & q_n(k) \;\; := \;\; \left\{ \begin{array}{lll} p^k
\big(\frac{1}{1-p}-p^n+p^{2n}\big)^{-1} & \textrm{ for } & k \neq n,
\\ p^{2n} \big(\frac{1}{1-p}-p^n+p^{2n}\big)^{-1}  & \textrm{ for } &
k = n.
\end{array} \right.
\end{eqnarray}
Clearly \begin{eqnarray} \sup_{j \in \stany} \| Q_{n+1}(j,\cdot) -
Q_{n}(j,\cdot)\|_{TV} & = & q_{n+1}(n) - q_n(n) \to 0.
\nonumber \end{eqnarray}
However
\begin{eqnarray} \nonumber
\sup_{j \in \stany} \| P_{n+1}(j,\cdot) - P_{n}(j,\cdot)\|_{TV} & \geq  &
P_{n+1}(n,0) - P_{n}(n,0) \\  &=& \nonumber  \min\Big\{q_{n+1}(0),
\frac{\pi(0)}{\pi(n)}q_{n+1}(n)\Big\} \\ \nonumber  && \qquad \qquad
- \;\;\min\Big\{q_{n}(0),  \frac{\pi(0)}{\pi(n)}q_{n}(n)\Big\} \\
\nonumber & = & q_{n+1}(0) - q_n(0)p^n \to 1-p \neq 0.
\end{eqnarray}
\end{example}


\section{A specific Metropolis-within-Gibbs adaptive choice}
\label{sec-componentwise}

As an application of the previous section, we discuss a particular
method of adapting the $\alpha_i$ selection probabilities for the
doubly-adaptive Metropolis-within-Gibbs algorithms.  We are motivated by
two closely-related
componentwise
adaptation algorithms, from~\cite{Haario2} and from Section~3 of
\cite{RobRos_ex}. Briefly, these algorithms
use a deterministic scan Metropolis-within-Gibbs sampler and perform a
random walk Metropolis step for updating coordinate $i$ by
proposing a normal increment to $X_{n-1,i},$ i.e. the proposal
$Y_{n,i} \sim N(X_{n-1,i}, \sigma^2_{n,i}).$ The proposal variance
$\sigma^2_{n,i}$ is subject to adaptation. Haario et al. in
\cite{Haario2} use \begin{eqnarray}\label{eqn_prop_var_HST}
\sigma^{2,\textrm{HST}}_{n,i} & = &(2.4)^2(s^2_{n,i} +
0.05),\end{eqnarray} where $s^2_{n,i}$ is the sample variance of
$X_{0,i}, \dots, X_{n-1,i},$ whereas Roberts and Rosenthal in
\cite{RobRos_ex} take
\begin{eqnarray}\label{eqn_prop_var_RR}\sigma^{2,\textrm{RR}}_{n,i} &
= & e^{ls_i},\end{eqnarray} and $ls_i$ is updated every batch of 50
iterations by adding or subtracting $\delta(n) = O(n^{-1/2}).$
Specifically, $ls_i$ is increased by $\delta(n)$ if the fraction of
acceptances of variable $i$ was more then $0.44$ on the last batch and
decreased if it was less.

Both rules have theoretical motivation, c.f.
\cite{RobRos_scaling_2001}; $\sigma^{2,\textrm{HST}}_{n,i}$ comes from
diffusion limit considerations in infinite dimensions and
$\sigma^{2,\textrm{RR}}_{n,i}$ is motivated by one dimensional
Gaussian target densities. Conclusions drawn in this very special
situations are observed empirically to be robust in a wide range of
examples that are neither high-dimensional nor Gaussian
\cite{RobRos_scaling_2001, RobRos_ex}.

In this section, we use a random scan Gibbs sampler instead of a
deterministic scan, and optimise the coordinate selection probabilities
$\alpha_i$ simultaneously with proposal variances. We aim at minimizing
the asymptotic variance.  Under certain strong conditions
(Assumption~\ref{assu_for_componentwise}) that allow for illustrative
analysis and explicit calculations, we shall provide approximately optimal
adaptions for the $\alpha_i$ in equations (\ref{eqn_adap_alpha_def_HST})
and (\ref{eqn_adap_alpha_def_RR}) below, and shall prove ergodicity of
the corresponding algorithms in Theorem~\ref{componentwise_thm}. More
general adaptation algorithms for random scan Gibbs samplers have
been investigated by others (e.g.\ \cite{LiuWongKong, Levine_Casella,
Levine05a, Levine05b}).

\begin{ass}\label{assu_for_componentwise}
The following conditions hold. \begin{itemize}
\item[(i)] The stationary distribution on $\mathcal{X} = \mathbb{R}^d$
is of the product form \begin{eqnarray}\label{assu_eqn_product_target}
\pi(x) & = & \prod_{i=1}^d C_i \, g(C_i x_i),
\end{eqnarray} where $g$ is a one dimensional density and $C_i,$ $i=1,
\dots, d,$ are unknown, strictly positive constants.
\item[(ii)] The second moment of $g$ exists, i.e.\ $\sigma^2:=
Var_{g}Z < \infty$.
\item[(iii)] The one-dimensional
random walk Metropolis algorithm with $N(x, 1)$
proposal distributions and target density $g$ is uniformly ergodic.
\end{itemize}
\end{ass}

We consider an adaptive random scan adaptive random walk
Metropolis-within-Gibbs algorithm $\texttt{AdapRSadapMwG},$ with Gaussian
proposals, for estimating expectation of a linear target function
\begin{eqnarray}\label{eqn_example_section_f_linear}
f(x) & = & a_0 + \sum_{i=1}^d a_i x_i.
\end{eqnarray}

A random scan Gibbs sampler for a target density of product form
(\ref{assu_eqn_product_target}) is uniformly ergodic, therefore
arguing as in the proof of Theorem~\ref{thm_unif_aMwG}, under
Assumption~\ref{assu_for_componentwise} a random scan
Metropolis-within-Gibbs with $N(x, 1)$ proposals is uniformly ergodic.
Moreover, by $(ii),$ function $f$ defined in
(\ref{eqn_example_section_f_linear}) is square integrable and the
Markov chain CLT holds, i.e. for any initial distribution of $X_0$
\begin{eqnarray}\label{eqn_CLT}
n^{-1/2} \bigg(\sum_{k=0}^{n-1} f(X_i) - n\mathbb{E}_{\pi} f(X)\bigg)
& \to & N(0,\sigma^2_{\textrm{as}}), \qquad \textrm{ as} \quad n \to
\infty,
\end{eqnarray}
where the asymptotic variance $\sigma^2_{\textrm{as}} < \infty$ can be
written as
\begin{eqnarray}\label{eqn_as_var_1}
\sigma^2_{\textrm{as}} & = & \tau_f Var_{\pi}f(X), \qquad \textrm{
and} \\ \label{eqn_int_autocor_1} \tau_f & = & 1 +
2\sum_{k=1}^{\infty}Cor_{\pi}(f(X_0), f(X_k)),
\end{eqnarray}
is the stationary integrated autocorrelation time. Markov chain CLTs
and asymptotic variance formulae are discussed e.g. in \cite{RobRos,
HagRos, BeLaLa}. Note that under
Assumption~\ref{assu_for_componentwise} the asymptotic variance
decomposes and some explicit computations are possible.
\begin{eqnarray}\label{eqn_as_var2}
\sigma^2_{\textrm{as}} & = & \sum_{i=1}^d \sigma^2_{\textrm{as}, i}
\;\; = \;\; \sum_{i=1}^d \tau_{f,i} Var_{\pi, i}f(X), \qquad \textrm{
where}\qquad  \\ \label{eqn_int_autocor_2} \tau_{f,i} & = & 1 +
2\sum_{k=1}^{\infty}Cor_{\pi}(X_{0,i}, X_{k,i}), \qquad \textrm{ and}
\\ \label{eqn_as_var_components_2} Var_{\pi, i}f(X) & := &
Var_{\pi}(a_i X_{0,i}) \;\; = \;\;   \frac{a_i^2}{C_i^2}\sigma^2.
\end{eqnarray}
To compute $\tau_{f,i}$ for a random scan Metropolis-within-Gibbs
sampler in the present setting, we focus solely on coordinate $i,$
i.e. the Markov chain $X_{n,i},$ $n = 0,1, \dots$ Due to the product
form of $\pi,$ the distribution of $X_{n,i}|X_{n}$ does not depend on
$X_{n, -i}.$ Let $P_i$ be the transition kernel that describes the
dynamics of $X_{n, i},$ $n = 0,1, \dots$ and let $\alpha =
(\alpha_{1}, \dots, \alpha_{d})$ denote the (fixed) selection
probabilities.  We write $P_i$ as a mixture \begin{eqnarray}
\label{eqn_mixture_repres_for_coordinatewise_RSMwG}
P_i(x_i, \cdot) & = & (1-\alpha_{i})\textrm{Id} +
\alpha_{i}P^{\textrm{Metrop}}_{i}(x_i, \cdot),
\end{eqnarray}
where $\textrm{Id}$ denotes the identity kernel and
$P^{\textrm{Metrop}}_{i}$ performs a single Metropolis step for the
target distribution $C_i g(C_i x).$ Thus $P_i$ is a lazy version of
$P^{\textrm{Metrop}}_{i},$ since it performs a
$P^{\textrm{Metrop}}_{i}$ step if coordinate $i$ is selected with
probability $\alpha_{i}$ and an identity step otherwise. We will use
Lemma~\ref{lemma_as_var_of_lazy} below, which is a general result
about asymptotic variance of lazy reversible Markov chains. Suppose $$
h \in L_0^2(\pi):= \{h \in L^2(\pi): \pi h = 0\},$$ and denote
$$\sigma^2_{h,H} := \lim_{n \to \infty} \frac{1}{n} Var
\bigg(\sum_{i=0}^{n-1} h(Z_i)\bigg),$$ where $Z_0, Z_2, \dots$ is a
Markov chain with  transition kernel $H$ and initial distribution
$\pi$ that is stationary for $H.$

\begin{lemma}\label{lemma_as_var_of_lazy} Let $P$ be a reversible
transition kernel with stationary measure $\pi.$ Let $\delta \in
(0,1)$ and by $P_{\delta}$ denote its lazy version $$P_{\delta} \; =
\; (1-\delta)\textrm{Id} + \delta P.$$ Then \begin{eqnarray}
\label{eqn_as_var_of_lazy}
\sigma^2_{h, P_{\delta}} & = & \frac{1}{\delta}\sigma^2_{h,P} +
\frac{1-\delta}{\delta} \pi h^2.
\end{eqnarray}
\end{lemma}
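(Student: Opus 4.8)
The plan is to exploit the reversibility of $P$, which makes $P$ a self-adjoint contraction on $L^2(\pi)$, and to read off both asymptotic variances from the spectral measure of $h$. First I would recall the spectral representation of the asymptotic variance: writing $E_h$ for the positive spectral measure associated to $h$ and to the self-adjoint operator $P$, supported on $[-1,1]$ with total mass $E_h([-1,1]) = \langle h, h\rangle_\pi = \pi h^2$, one has $\gamma_k := \textrm{Cov}_\pi(h(Z_0), h(Z_k)) = \langle h, P^{|k|} h\rangle_\pi = \int_{-1}^1 \lambda^{|k|}\, E_h(\d\lambda)$. Summing the autocovariances and evaluating the geometric series then gives $\sigma^2_{h,P} = \gamma_0 + 2\sum_{k\geq 1}\gamma_k = \int_{-1}^1 \frac{1+\lambda}{1-\lambda}\, E_h(\d\lambda)$, with the convention that the integral is $+\infty$ when the integrand is not $E_h$-integrable.

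The key structural observation is that $I - P_\delta = \delta (I - P)$, so that $P_\delta$ is again self-adjoint with eigenvalues $\mu = (1-\delta) + \delta\lambda$; consequently its spectral measure associated to $h$ is the pushforward of $E_h$ under the affine map $\lambda \mapsto (1-\delta)+\delta\lambda$, and in particular still has total mass $\pi h^2$. Applying the representation of the previous step to $P_\delta$ and changing variables back to $\lambda$, I would compute the integrand $\frac{1+\mu}{1-\mu}$. A short algebraic simplification, using $1-\mu = \delta(1-\lambda)$ and $1+\mu = 2 - \delta(1-\lambda)$, yields the clean decomposition $\frac{1+\mu}{1-\mu} = \frac{1}{\delta}\cdot\frac{1+\lambda}{1-\lambda} + \frac{1-\delta}{\delta}$.

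Integrating this identity against $E_h$ term by term then gives $\sigma^2_{h,P_\delta} = \frac{1}{\delta}\int_{-1}^1 \frac{1+\lambda}{1-\lambda}\,E_h(\d\lambda) + \frac{1-\delta}{\delta}\int_{-1}^1 E_h(\d\lambda) = \frac{1}{\delta}\sigma^2_{h,P} + \frac{1-\delta}{\delta}\,\pi h^2$, which is exactly (\ref{eqn_as_var_of_lazy}). The main obstacle is the rigorous justification of the spectral representation and, relatedly, the behaviour near $\lambda = 1$: when $E_h$ charges a neighbourhood of $1$ strongly enough, both sides are simultaneously $+\infty$ (the multiplier $1/\delta>0$ preserves finiteness), so the identity holds in $[0,\infty]$; for the uniformly ergodic chains to which we apply the lemma there is a spectral gap bounded away from $1$, so every quantity is finite and the manipulations are unambiguous.

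As an alternative that sidesteps spectral measures entirely, I would carry out the same computation at the operator level, which I expect to be the cleaner route for a self-contained write-up. Since $I - P_\delta = \delta(I-P)$ gives $(I - P_\delta)^{-1} = \frac{1}{\delta}(I-P)^{-1}$, and since $\sigma^2_{h,P} = \langle h, (I+P)(I-P)^{-1}h\rangle_\pi$, writing $(2-\delta)I + \delta P = \delta(I+P) + 2(1-\delta)I$ and using $\langle h, (I-P)^{-1}h\rangle_\pi = \tfrac12\big(\sigma^2_{h,P} + \pi h^2\big)$ reproduces the formula directly; here the only care needed is to work on $L^2_0(\pi)$, where $(I-P)^{-1}$ is well defined under the spectral gap assumption available in our applications.
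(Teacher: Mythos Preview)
Your proof is correct and follows essentially the same route as the paper: both use the spectral representation $\sigma^2_{h,P}=\int\frac{1+\lambda}{1-\lambda}\,E_{h}(\d\lambda)$, observe that $P_\delta$ has spectral values $(1-\delta)+\delta\lambda$ (the paper obtains this via the binomial expansion of $P_\delta^n$, you via the pushforward description, which is the same thing), and then verify the algebraic identity $\frac{1+\mu}{1-\mu}=\frac{1}{\delta}\frac{1+\lambda}{1-\lambda}+\frac{1-\delta}{\delta}$ before integrating against $E_h$. Your additional operator-level derivation using $(I-P_\delta)^{-1}=\delta^{-1}(I-P)^{-1}$ is a nice alternative not present in the paper.
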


\begin{proof} The proof is based on the functional analytic approach
(see e.g. \cite{KipnisVaradhan, RobRos_electr}). A reversible
transition kernel $P$ with invariant distribution $\pi$ is a
self-adjoint operator on $L_0^2(\pi)$ with spectral radius bounded by
1. By the spectral decomposition theorem for self adjoint operators,
for each $h \in L^2_0(\pi)$ there exists a finite positive measure
$E_{h,P}$ on $[-1,1],$ such that \begin{eqnarray} \nonumber
\langle h, P^n h \rangle & = & \int_{[-1,1]} x^n E_{h,P}(dx),
\end{eqnarray}
for all integers $n \geq 0.$ Thus in particular \begin{eqnarray}
\label{eqn_spectral_in_proof_stat_var}
\pi h^2 & = & \int_{[-1,1]}1 E_{h,P}(dx), \\
\sigma^2_{h,P} & = & \int_{[-1,1]}\frac{1+x}{1-x}E_{h,P}(dx).
\label{eqn_spectral_in_proof_as_var}
\end{eqnarray}
Since $$P_{\delta}^n \;\;= \;\;\big((1-\delta)\textrm{Id} + \delta
P\big)^n = \sum_{k=0}^n \binom{n}{k}(1-\delta)^k\delta^{n-k}P^{n-k},$$
we have
\begin{eqnarray}\nonumber
\langle h, P^n_{\delta}h \rangle & = & \int_{[-1,1]} ((1-\delta) +
\delta x)^n E_{h,P}(dx), \qquad \textrm{ and consequently} \\
\nonumber
\sigma^2_{h,P_{\delta}} & = & \int_{[-1,1]}\frac{1+1-\delta + \delta
x}{1-1 + \delta - \delta x}E_{h,P}(dx) \\ \nonumber
& = & \int_{[-1,1]}\frac{1}{\delta}\bigg( \frac{1+ x}{1- x} + 1-
\delta \bigg)E_{h,P}(dx)  \\ \nonumber & = & \frac{1}{\delta}
\int_{[-1,1]}\frac{1+x}{1-x}E_{h,P}(dx)     +
\frac{1-\delta}{\delta}\int_{[-1,1]}1 E_{h,P}(dx),
\end{eqnarray}
as claimed. \end{proof}

Let \begin{eqnarray}\label{eqn_as_var_of_i_Metrop_kernel}
\tilde{\sigma}^2_{\textrm{as},i} & = & \tilde{\tau}_{f,i}
Var_{\pi}(a_i X_{0,i}) \;\; = \;\; \tilde{\tau}_{f,i}
\frac{a_i^2}{C_i^2} \sigma^2,
\end{eqnarray} be the asymptotic variance of the Metropolis kernel
$P_i^{\textrm{Metrop}}$ defined in
(\ref{eqn_mixture_repres_for_coordinatewise_RSMwG}). Here
$\tilde{\tau}_{f,i}$ is its stationary integrated autocorrelation
time.  From Lemma~\ref{lemma_as_var_of_lazy} we have the following
formula for $\sigma^2_{as, i}$ of (\ref{eqn_as_var2}).
\begin{eqnarray} \nonumber
\sigma^2_{\textrm{as}, i} & = & \frac{1}{\alpha_i}
\tilde{\sigma}^2_{\textrm{as}, i} + \frac{1- \alpha_i}{\alpha_i}
\frac{a_i^2}{C_i^2} \sigma^2, \qquad \textrm{ hence} \\
\label{eqn_autocor_relations} \tau_{f,i} & = & \frac{1}{\alpha_i}
\tilde{\tau}_{\textrm{as}, i} + \frac{1- \alpha_i}{\alpha_i}.
\end{eqnarray}
Now we take advantage of the fact that $f$ is linear and of the actual
adaptation of the proposal variances performed by both versions,
i.e.\ HST and RR. Namely, they aim at minimizing their integrated
autocorrelation time $\tilde{\tau}_{\textrm{as}, i}.$ Under
Assumption~\ref{assu_for_componentwise} the conditional distributions
are equally shaped up to the scaling constant $C_i.$ However the
adaptive algorithm will learn $C_i$ and adjust the proposal variance
accordingly. We conclude that after an initial learning period the
following proportionality relation will hold approximately $$
\sigma^{2,\textrm{HST}}_{n,i} \;\; \propto \;\;
\sigma^{2,\textrm{RR}}_{n,i} \;\;\propto \;\;  1/C_i^2,$$ and also the
stationary integrated autocorrelation times for the adapted
$P_i^{\textrm{Metrop}}$ will be close to the (unknown) optimal value,
say $T,$ i.e. \begin{eqnarray}\label{eqn_tilde_tau_const}
\tilde{\tau}_{\textrm{as}, i} & \approx & T.\end{eqnarray} Typically
$T \gg 1,$ hence we can approximately write (using
(\ref{eqn_autocor_relations}), (\ref{eqn_as_var2}),
(\ref{eqn_int_autocor_2}), (\ref{eqn_as_var_components_2}) and
(\ref{eqn_mixture_repres_for_coordinatewise_RSMwG}))
\begin{eqnarray} \nonumber
\tau_{f,i} & \approx & T/\alpha_i, \\ \label{eqn_final_variance_coord}
\sigma^2_{\textrm{as}, i}  & \approx & \frac{T a_i^2}{C_i^2 \alpha_i}\sigma^2,
\qquad \textrm{ and finally} \\ \label{eqn_final_variance}
\sigma^2_{\textrm{as}} & \approx & T \sigma^2 \sum_{i=1}^d
\frac{a_i^2}{C_i^2 \alpha_i} \;\; \propto \;\; \sum_{i=1}^d
\frac{\sigma^{2,\textrm{HST}}_{n,i}  a_i^2}{\alpha_i} \;\; \propto
\;\; \sum_{i=1}^d \frac{\sigma^{2,\textrm{RR}}_{n,i}
a_i^2}{\alpha_i}.
\end{eqnarray}
The last expression is minimised for
\begin{eqnarray}\label{eqn_final_choice_of_alpha_i}
\alpha_i & \propto & \Big( \sigma^{2,\textrm{HST}}_{n,i}  a_i^2
\Big)^{1/2} \;\; \propto \;\; \Big( \sigma^{2,\textrm{RR}}_{n,i}
a_i^2 \Big)^{1/2},
\end{eqnarray}
which yields a very intuitive prescription for adapting selection
probabilities, namely by setting \begin{eqnarray}
\label{eqn_adap_alpha_def_HST}
\alpha_{n,i}^{\textrm{HST}} & := & \frac{\Big(
\sigma^{2,\textrm{HST}}_{n,i}  a_i^2 \Big)^{1/2}}{\sum_{k=1}^d \Big(
\sigma^{2,\textrm{HST}}_{n,k}  a_k^2 \Big)^{1/2}} \qquad \textrm{for
the HST version of \cite{Haario2}, and}\qquad \quad  \\
\label{eqn_adap_alpha_def_RR}
\alpha_{n,i}^{\textrm{RR}} & := & \frac{\Big(
\sigma^{2,\textrm{RR}}_{n,i}  a_i^2 \Big)^{1/2}}{\sum_{k=1}^d \Big(
\sigma^{2,\textrm{RR}}_{n,k}  a_k^2 \Big)^{1/2}} \qquad \textrm{ for
the RR version of \cite{RobRos_ex}.}
\end{eqnarray}
The above argument shows:
{\em (\ref{eqn_adap_alpha_def_HST}) and
(\ref{eqn_adap_alpha_def_RR}) are approximately optimal choices of
adaptive selection probabilities for these algorithms, at least for
target densities of the form (\ref{assu_eqn_product_target}).}

We next prove ergodicity of these algorithms.
Let HST-algorithm denote an \texttt{AdapRSadapMwG} that uses
(\ref{eqn_prop_var_HST}) for updating proposal variances and
(\ref{eqn_adap_alpha_def_HST}) for updating selection probabilities.
Similarly let RR-algorithm follow (\ref{eqn_prop_var_RR}) and
(\ref{eqn_adap_alpha_def_RR}) with additional restriction for $ls_i$
to stay in $[-M, M]$ for some fixed, large $M<\infty$ (which
technically plays the role of $0.05$ in  (\ref{eqn_prop_var_HST}) for
the HST-algorithm).

\begin{thm} \label{componentwise_thm}
Under Assumption~\ref{assu_for_componentwise} the HST- and
RR-algorithms are ergodic.
\end{thm}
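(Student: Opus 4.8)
The plan is to derive both ergodicity statements as instances of Theorem~\ref{thm_unif_aMwG}, so the whole proof reduces to verifying its hypotheses (a)--(d) for each of the HST- and RR-algorithms. Condition (b) is immediate and is essentially already recorded in the discussion preceding Assumption~\ref{assu_for_componentwise}: for the product target (\ref{assu_eqn_product_target}) each Gibbs update redraws a coordinate from its marginal, so one systematic sweep produces an exact draw from $\pi$; hence the systematic scan Gibbs sampler is (trivially) uniformly ergodic and Proposition~\ref{prop_checking_unif_erg_for_G} gives uniform ergodicity of \texttt{RSG($\beta$)}, e.g.\ for $\beta = (1/d,\dots,1/d)$.

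For condition (c) I would check Assumption~\ref{assu_adap_Metrop_steps_uniformly}. Writing the conditional target for coordinate $i$ as $C_i g(C_i\,\cdot)$ and rescaling by $C_i$, the one-dimensional random walk Metropolis chain with a Gaussian proposal of variance $\sigma^2$ is conjugate to the chain for $g$ with proposal variance $C_i^2\sigma^2$; Assumption~\ref{assu_for_componentwise}(iii) supplies uniform ergodicity of the base chain, and the scaling/mixture argument of Lemma~\ref{lemma_uniformly_uniformly} then yields a minorization $P^{m_i}_{x_{-i},\gamma_i}(x_i,\cdot)\ge s_i\nu_{x_{-i},\gamma_i}(\cdot)$, provided the proposal variances range over a compact interval bounded away from $0$. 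For the RR-algorithm this holds because the truncation $ls_i\in[-M,M]$ forces $\sigma^{2,\mathrm{RR}}_{n,i}=e^{ls_i}\in[e^{-M},e^{M}]$. For the HST-algorithm the variance is bounded below by $(2.4)^2(0.05)>0$, while the upper bound is the delicate point discussed below.

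Conditions (a) and (d) then follow from the slow variation of the proposal variances. In both cases $\sigma^{2}_{n,i}$ changes by vanishing increments --- of order $1/n$ for the running sample variance in HST and by $\delta(n)=O(n^{-1/2})$ per batch in RR --- and on a compact set of variances bounded away from $0$ the maps $(\sigma^2_{n,1},\dots,\sigma^2_{n,d})\mapsto\alpha_{n,i}$ in (\ref{eqn_adap_alpha_def_HST})--(\ref{eqn_adap_alpha_def_RR}) are smooth with denominator bounded away from $0$, hence Lipschitz, so that $|\alpha_n-\alpha_{n-1}|\to0$; the same bounds keep each $\alpha_{n,i}$ bounded below by a positive constant, so $\alpha_n\in\mathcal{Y}$ for a suitable $\varepsilon$ (assuming each $a_i\ne0$). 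This gives (a). For (d), the proposals are symmetric Gaussians, so Lemma~\ref{lemma_diminish_Q_P}(i) reduces the task to diminishing adaptation of the proposal kernels $Q_{x_{-i},\gamma_{n,i}}=N(\,\cdot\,,\sigma^2_{n,i})$; since the total-variation distance between two Gaussians of common mean and variances bounded away from $0$ tends to $0$ with the difference of the variances (and does not depend on the mean), (d) holds. With (a)--(d) verified, Theorem~\ref{thm_unif_aMwG} delivers ergodicity.

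The \emph{main obstacle} is the HST case, where $\sigma^{2,\mathrm{HST}}_{n,i}=(2.4)^2(s^2_{n,i}+0.05)$ is a priori unbounded above, which would simultaneously destroy the uniform minorization in (c) and the Lipschitz control in (a) and (d). The key observation that rescues it is that uniform ergodicity of a random walk Metropolis chain with a fixed Gaussian proposal --- as assumed in (iii) --- forces $g$ to have bounded support, since from arbitrarily distant states a local proposal cannot return to the bulk in a bounded number of steps with probability bounded below, so $\sup_x\|P^n(x,\cdot)-\pi\|_{TV}$ could not decay uniformly in $n$. Consequently every coordinate $X_{k,i}$ stays in a bounded set, the sample variances $s^2_{n,i}$ are uniformly bounded, and the HST proposal variances in fact lie in a compact interval, putting HST on the same footing as RR. I would make this boundedness-of-support consequence of (iii) explicit, since the entire HST argument hinges on it.
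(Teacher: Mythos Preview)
Your proposal is correct and follows essentially the same route as the paper: verify hypotheses (a)--(d) of Theorem~\ref{thm_unif_aMwG}, using the product form for (b), the $O(n^{-1})$ (resp.\ $O(n^{-1/2})$) increments together with a bounded-denominator/Lipschitz argument for (a), Lemma~\ref{lemma_diminish_Q_P}(i) for (d), and the key observation that Assumption~\ref{assu_for_componentwise}(iii) forces $g$ to have bounded support, which bounds the HST sample variances from above and puts HST on the same footing as RR. One small remark on (c): invoking the ``mixture argument of Lemma~\ref{lemma_uniformly_uniformly}'' is not quite the right reference, since that lemma is specific to Gibbs selection probabilities; the argument the paper actually uses---and what you are implicitly gesturing at---is that on a compact support and with variances confined to a compact interval bounded away from zero, the ratio $\phi_{\sigma_1}(x-y)/\phi_{\sigma_2}(x-y)$ is uniformly bounded, so the Radon--Nikodym derivative between any two proposal kernels is bounded and the minorisation for the base chain (proposal variance $1$) transfers to all admissible proposal variances.
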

\begin{proof} It is enough to check that the assumptions of
Theorem~\ref{thm_unif_aMwG} are satisfied. We do this for the
HST-algorithm; the proof for the RR-algorithm follows in the same way.
Condition (b) is immediately implied by
Assumption~\ref{assu_for_componentwise}~(i), since~(b) requires only that
the full Gibbs sampler is uniformly ergodic, which is obvious
for a product target density of the form~(\ref{assu_eqn_product_target}).
Next, observe that
Assumption~\ref{assu_for_componentwise} (iii) implies that the support
of $C_i g(C_i x),$ say $S_{g,i},$ is bounded, therefore the sample
variance estimate in (\ref{eqn_prop_var_HST}) is bounded from
above and for the HST-algorithm, for every $i\in \{1, \dots, d\},$
\begin{eqnarray}\label{eqn_sigma_bounded}
\sigma_{n,i}^{2,\textrm{HST}} & \in & [(2.4)^2 0.05, \; K_i] \; =: \;
S_{\sigma, i}\end{eqnarray} for some $K_i < \infty.$ Thus (a) holds
since the denominator in (\ref{eqn_adap_alpha_def_HST}) is bounded
from below and the change in sample variance \begin{eqnarray}
\label{eqn_diff_in_sigma} \sigma_{n,i}^{2,\textrm{HST}}
-\sigma_{n+1,i}^{2,\textrm{HST}} & = & O(n^{-1}).\end{eqnarray}
Condition (d) results from (\ref{eqn_diff_in_sigma}),
(\ref{eqn_sigma_bounded}) and Lemma~\ref{lemma_diminish_Q_P}~(i). We
are left with (c). Let $\phi_{\sigma}(\cdot)$ denote the density function
of $N(0,\sigma^2).$ Since $$\sup_{i\in \{1, \dots, d\}; x,y \in
S_{g,i};\sigma_1, \sigma_2 \in S_{\sigma, i}}
\phi_{\sigma_1}(x-y)/\phi_{\sigma_2}(x-y) < \infty,$$ the
Radon-Nikodym derivative of all pairs of proposals for every
coordinate is bounded and hence
Assumption~\ref{assu_adap_Metrop_steps_uniformly} is implied again by
Assumption~\ref{assu_for_componentwise}~(iii).
\end{proof}

%

\begin{remark}
\begin{enumerate}
\item Condition $(i)$ of Assumption~\ref{assu_for_componentwise} is
very restrictive, however it already proved extremely helpful in
understanding high dimensional MCMC algorithms via diffusion limits
\cite{RobertsGelmanGilks, RobRos_scaling_2001, Bedard_aap},
and conclusions drawn under $(i)$ are empirically observed to be
robust even if the condition is violated. It is essential to
investigate its robustness also in the Gibbs sampler setting.
\item Minor generalisations to $(i)$ are straightforward, e.g. our
conclusions hold for $\mathcal{X} = \prod_{i=1}^d \mathcal{X}_i,$
where $\mathcal{X}_i = \mathbb{R}^k.$
\item Condition $(iii)$ of Assumption~\ref{assu_for_componentwise} is
required to ensure asymptotic validity of our algorithm by
Theorem~\ref{thm_unif_aMwG}. We will report separately on ergodicity
of adaptive random scan Gibbs samplers in the non-uniform case.
\end{enumerate}
\end{remark}



\section{Proof of Proposition~\ref{fact_X_nonergodic}}
\label{sec_counter_proof}

The analysis of Example~\ref{ex_stairway_to_heaven_2} is somewhat
delicate since the process is both time and space inhomogeneous (as
are most nontrivial adaptive MCMC algorithms). To establish
Proposition~\ref{fact_X_nonergodic}, we will define a couple of auxiliary
stochastic process. Consider the following one dimensional process
$\lancucht$ obtained from $\lancuch$ by $$\tilde{X}_n := X_{n,1} +
X_{n,2}-2.$$ Clearly $\tilde{X}_n - \tilde{X}_{n-1} \in \{-1,0,1\},$
moreover $X_{n,1}\to \infty$ and $X_{n,2} \to \infty$ if and only if
$\tilde{X}_n \to \infty.$ Note that the dynamics of $\lancucht$ are
also both time and space inhomogeneous.

We will also use an auxiliary random-walk-like space homogeneous process $$S_0 = 0 \quad \textrm{and} \quad S_n: = \sum_{i=1}^n Y_i,\quad \textrm{for } n\geq 1,$$ where $Y_1, Y_2, \dots$ are independent random variables taking values in $\{-1,0,1\}.$ Let the distribution of $Y_n$ on $\{-1,0,1\}$ be \begin{equation}\label{eqn_distr_of_Y} \nu_n \;:= \;\left\{\frac{1}{4}-\frac{1}{a_n} \;,\;\frac{1}{2}\;,\;\frac{1}{4}+\frac{1}{a_n} \right\}.\end{equation}

We shall couple $\lancucht$ with $\lancuchs,$ i.e. define them on the same probability space $\{\Omega, \mathcal{F}, \mathbb{P}\},$ by specifying the joint distribution of $(\tilde{X}_n, S_n)_{n \geq 0}$ so that the marginal distributions remain unchanged. We describe the details of the construction later. Now define \begin{equation}\label{eqn_def_Omega_XS} \Omega_{\tilde{X}\geq S}:= \{ \omega \in \Omega: \tilde{X}_n(\omega) \geq S_n(\omega) \;\textrm{ for every }\;n\}\end{equation} and \begin{equation}\label{eqn_def_Omega_infty} \Omega_{\infty}:= \{ \omega \in \Omega: S_n(\omega) \to \infty\}.\end{equation} Clearly, if $\omega \in \Omega_{\tilde{X}\geq S}\cap\Omega_{\infty},$ then $\tilde{X}_n(\omega) \to \infty.$ In the sequel we show that for our coupling construction \begin{equation}\label{eqn_goal} \mathbb{P}(\Omega_{\tilde{X}\geq S}\cap\Omega_{\infty}) > 0.\end{equation}

We shall use the Hoeffding's inequality for $S_k^{k+n} := S_{k+n} - S_k.$ Since $Y_n \in [-1,1],$ it yields for every $t> 0,$ \begin{equation}
\label{eqn_Hoeff_first_time} 
\mathbb{P}(S_k^{k+n} - \mathbb{E}S_k^{k+n} \leq -nt) \leq \exp \{-\frac{1}{2}nt^2\}.
\end{equation} 
Note that $\mathbb{E}Y_n = 2/a_n$ and thus $\mathbb{E}S_{k}^{k+n} = 2\sum_{i=k+1}^{k+n}1/a_i.$ The following choice for the sequence $a_n$ will facilitate further calculations. Let
\begin{eqnarray*}
b_0 & = & 0, \\
b_1 & = & 1000, \\
b_{n}& = &b_{n-1} \Big(1 + \frac{1}{10 + \log(n)}\Big), \qquad \textrm{for} \quad n \geq 2\\
c_n & = & \sum_{i=0}^{n}b_n,\\
a_n & = & 10+\log(k),\, \quad \qquad \qquad \qquad \textrm{for} \quad c_{k-1} < n \leq c_{k}.
\end{eqnarray*}
\begin{remark} To keep notation reasonable we ignore the fact that $b_n$ will not be an integer. It should be clear that this does not affect proofs, as the constants we have defined, i.e. $b_1$ and $a_1$ are bigger then required.\end{remark}
\begin{lemma}\label{lemma_S_n_above} Let $Y_n$ and $S_n$ be as defined above and let \begin{eqnarray}
\Omega_1 & := & \Big\{\omega \in \Omega: S_k =k \quad \textrm{for every} \quad 0 < k \leq c_1\Big\}.\\ \label{eqn_lemma_S_n_above} 
\Omega_n &:=& \Big\{\omega \in \Omega: S_k \geq \frac{b_{n-1}}{2} \quad \textrm{for every} \quad c_{n-1} < k \leq c_n\Big\} \;\quad \textrm{for } \; n\geq 2.\qquad 
\end{eqnarray}
Then \begin{equation} \label{eqn_lemma_Y_n_above}
\mathbb{P}\bigg( \bigcap_{n=1}^{\infty} \Omega_n\bigg) > 0.
\end{equation}
\end{lemma}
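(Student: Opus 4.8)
The plan is to peel off the first block by conditioning on $\Omega_1$, and then to bound the probability that the walk ever dips below its floor in any later block by a single union bound, reducing everything to a drift estimate fed into the Hoeffding bound~(\ref{eqn_Hoeff_first_time}). On $\Omega_1$ all of the first $c_1=b_1$ increments equal $+1$, so $\mathbb{P}(\Omega_1)=(\tfrac14+\tfrac1{10})^{b_1}=(7/20)^{1000}>0$ (since $a_i=10$ on block $1$) and, crucially, $S_{c_1}=b_1$. Because the $Y_i$ are independent, conditioning on $\Omega_1$ fixes $Y_1,\dots,Y_{c_1}$ but leaves the law of every later increment unchanged, so for $k>c_1$ the centred sum $S_k-S_{c_1}-\mathbb{E}[S_k-S_{c_1}]$ is still a sum of $k-c_1$ independent variables in $[-1,1]$, to which~(\ref{eqn_Hoeff_first_time}) applies. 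Writing $\mathbb{P}\big(\bigcap_{n\ge1}\Omega_n\big)=\mathbb{P}(\Omega_1)\,\mathbb{P}\big(\bigcap_{n\ge2}\Omega_n\mid\Omega_1\big)$ and using $\mathbb{P}\big(\bigcap_{n\ge2}\Omega_n\mid\Omega_1\big)\ge 1-\sum_{n\ge2}\mathbb{P}(\Omega_n^{c}\mid\Omega_1)$, it suffices to show that $\sum_{n\ge2}\mathbb{P}(\Omega_n^{c}\mid\Omega_1)<1$.

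The heart of the argument is a drift estimate showing that, throughout block $n$, the conditional mean of $S_k$ lies well above the floor $b_{n-1}/2$. Using $\mathbb{E}Y_i=2/a_i$ with $a_i=10+\log j$ on block $j$, the expected gain of $S$ over block $j$ equals $b_j\cdot\frac{2}{10+\log j}$, and the recursion $b_j-b_{j-1}=b_{j-1}/(10+\log j)$ gives $b_j\frac{2}{10+\log j}=2\frac{b_j}{b_{j-1}}(b_j-b_{j-1})\ge 2(b_j-b_{j-1})$. Summing over $j=2,\dots,n-1$, adding $S_{c_1}=b_1$, and discarding the (nonnegative) partial drift already accumulated inside block $n$, we obtain for every $k$ with $c_{n-1}<k\le c_n$ the uniform bound
\[
\mathbb{E}[S_k\mid\Omega_1]\ \ge\ b_1+2(b_{n-1}-b_1)\ =\ 2b_{n-1}-b_1 .
\]
Since $b_{n-1}\ge b_1$, the gap to the floor is $2b_{n-1}-b_1-\tfrac12 b_{n-1}=\tfrac32 b_{n-1}-b_1\ge\tfrac12 b_{n-1}$. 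This constant-factor margin is exactly what makes the estimate work: the per-step drift $2/a_i$ pushes the walk up roughly twice as fast as the floor $b_{n-1}/2$ rises, so that by the start of block $n$ the mean already sits near $2b_{n-1}$, comfortably above $b_{n-1}/2$.

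With this margin, the Hoeffding bound~(\ref{eqn_Hoeff_first_time}) applied to $S_k-S_{c_1}$ (a sum of $k-c_1\le c_n$ increments) gives, for each $k$ in block $n$,
\[
\mathbb{P}\big(S_k<\tfrac12 b_{n-1}\,\big|\,\Omega_1\big)\ \le\ \exp\!\Big(-\tfrac{(b_{n-1}/2)^2}{2(k-c_1)}\Big)\ \le\ \exp\!\Big(-\tfrac{b_{n-1}^2}{8c_n}\Big).
\]
As block $n$ contains $c_n-c_{n-1}=b_n$ indices, a union bound yields $\mathbb{P}(\Omega_n^{c}\mid\Omega_1)\le b_n\exp\!\big(-b_{n-1}^2/(8c_n)\big)$. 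Finally $c_n\le(n+1)b_n\le 2(n+1)b_{n-1}$ gives $b_{n-1}^2/(8c_n)\ge b_{n-1}/\big(16(n+1)\big)$, and since the $b_n$ grow super-polynomially (indeed $\log b_n\sim n/\log n$, whereas $b_{n-1}/(n+1)$ grows faster than any power of $n$) the exponential overwhelms the prefactor $b_n$, so $\sum_{n\ge2}\mathbb{P}(\Omega_n^{c}\mid\Omega_1)$ is not merely finite but tiny (the $n=2$ term is already of order $e^{-60}$). Hence $\mathbb{P}\big(\bigcap_{n\ge2}\Omega_n\mid\Omega_1\big)>0$, and multiplying by $\mathbb{P}(\Omega_1)>0$ proves~(\ref{eqn_lemma_Y_n_above}). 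The main obstacle is the second-paragraph drift computation: one must exploit the precise tuning between the block lengths $b_n-b_{n-1}=b_{n-1}/(10+\log n)$ and the per-step drift $2/(10+\log n)$ to guarantee the constant-factor margin \emph{uniformly} over the whole block, in particular at its left endpoint, after which the concentration and summability are routine.
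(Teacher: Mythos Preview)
Your proof is correct and takes a genuinely different route from the paper's. The paper applies Hoeffding block-by-block to the increments $S_{c_{n-1}}^{c_n}$, obtaining an infinite-product lower bound $p_{1,S}\prod_{n\ge 2}(1-p_n)$ with $p_n=\exp\big(-\tfrac{1}{2}b_n/(10+\log n)^2\big)$; it then handles the within-block minimum by a somewhat informal ``worst case is monotone down then monotone up'' argument, and finally shows $\sum p_n<\infty$ via a ratio comparison with $1/n^2$. You instead condition on $\Omega_1$, compute an explicit drift lower bound $\mathbb{E}[S_k\mid\Omega_1]\ge 2b_{n-1}-b_1$ by telescoping the recursion $b_j-b_{j-1}=b_{j-1}/(10+\log j)$, apply Hoeffding pointwise to $S_k-S_{c_1}$ for every $k$ in block $n$, and close with a single union bound over all blocks. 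Your approach avoids the informal within-block minimum step entirely and makes the role of the tuning between block lengths and drift completely transparent; the price is that you need $\sum_{n\ge 2}\mathbb{P}(\Omega_n^c\mid\Omega_1)<1$ rather than merely $<\infty$.

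One small point worth tightening: the simplification $c_n\le 2(n+1)b_{n-1}$ weakens the exponent from $b_{n-1}^2/(8c_n)$ (which is at least $\sim 30$ for all $n\ge 2$) down to $b_{n-1}/(16(n+1))$, whose minimum over $n$ is only about $11$ near $n\approx 12$. With the weakened bound the sum is still below $1$ (roughly $1/2$), so your argument goes through, but it is not ``tiny''; the $e^{-60}$ figure you quote for $n=2$ comes from the unsimplified exponent. If you keep the bound $b_n\exp(-b_{n-1}^2/(8c_n))$ directly, every term is genuinely negligible and the conclusion is immediate.
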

\begin{remark}
Note that $b_n \nearrow \infty$ and therefore $\bigcap_{n=1}^{\infty} \Omega_n \subset \Omega_{\infty}.$
\end{remark}
\begin{proof}
With positive probability, say $p_{1,S},$ we have $Y_1 = \dots = Y_{1000}=1$ which gives $S_{c_1} = 1000 = b_1.$ Hence $\mathbb{P}(\Omega_1) = p_{1,S} >0.$ Moreover recall that $S_{c_{n-1}}^{c_{n}}$ is a sum of $b_n$ i.i.d. random variables with $\mathbb{E}S_{c_{n-1}}^{c_{n}} = \frac{2b_n}{10+\log(n)}.$ Therefore for every $n \geq 1$ by Hoeffding's inequality with $t= 1/(10+\log(n)),$ we can also write
\begin{equation}
\nonumber 
\mathbb{P}\Big(S_{c_{n-1}}^{c_n} \leq \frac{b_n}{10+\log(n)}\Big) \leq \exp \bigg\{-\frac{1}{2}\frac{b_n}{(10+\log(n))^2}\bigg\} = :p_n.
\end{equation} 
Therefore using the above bound iteratively we obtain \begin{equation} \label{eqn_S_proof_1}
\mathbb{P}(S_{c_1} = b_1,\; S_{c_n} \geq b_n \;\; \textrm{for every} \;\; n \geq 2) \;\geq \;p_{1,S}\prod_{n=2}^{\infty} (1-p_n).
\end{equation}
Now consider the minimum of $S_k$ for $c_{n-1} < k \leq c_n$ and $n \geq 2.$ The worst case is when the process $S_k$ goes monotonically down and then monotonically up for $c_{n-1} < k \leq c_n.$ By the choice of $b_n,$ equation (\ref{eqn_S_proof_1}) implies also 
\begin{equation}
\mathbb{P}\bigg( \bigcap_{n=1}^{\infty} \Omega_n\bigg) \geq p_{1,S} \prod_{n=2}^{\infty} (1-p_n).
\end{equation}
Clearly in this case \begin{equation}\label{eqn_sum_p_n_infty} p_{1,S}\prod_{n=2}^{\infty}(1-p_n) > 0 \quad \Leftrightarrow \quad \sum_{n=1}^{\infty} \log(1-p_n) > -\infty \quad \Leftrightarrow \quad \sum_{n=1}^{\infty} p_n < \infty.\quad
\end{equation} 
We conclude (\ref{eqn_sum_p_n_infty}) by comparing $p_n$ with $1/n^2.$ We show that there exists $n_0$ such that for $n \geq n_0$ the series $p_n$ decreases quicker then the series $1/n^2$ and therefore $p_n$ is summable. We check that \begin{equation}\label{eqn_comparison}\log \frac{p_{n-1}}{p_{n}} > \log \frac{n^2}{(n-1)^2} \qquad \textrm{for} \quad n \geq n_0.\end{equation}
Indeed
\begin{eqnarray*}
\log \frac{p_{n-1}}{p_{n}} & = & -\frac{1}{2}\left(\frac{b_{n-1}}{(10+\log(n-1))^2} - \frac{b_n}{(10+\log(n))^2}\right)\\
 & = & \frac{b_{n-1}}{2}\left(\frac{11 + \log(n)}{(10+\log(n))^3} - \frac{1}{(10+\log(n-1))^2} \right)\\
& = &  \frac{b_{n-1}}{2}\left( \frac{(11 + \log(n))(10+\log(n-1))^2 - (10+\log(n))^3}{(10+\log(n))^3(10+\log(n-1))^2}\right).
\end{eqnarray*}
Now recall that $b_{n-1}$ is an increasing sequence. Moreover the enumerator can be rewritten as $$(10 + \log(n))\Big((10+\log(n-1))^2 - (10+\log(n))^2\Big)+ (10 + \log(n-1))^2,$$
now use $a^2-b^2 = (a+b)(a-b)$ to identify the leading term $(10 + \log(n-1))^2.$ Consequently there exists a constant $C$ and $n_0 \in \mathbb{N}$ s.t. for $n \geq n_0$
\begin{eqnarray*}
\log \frac{p_{n-1}}{p_{n}} & \geq & \frac{C}{(10+\log(n))^3} \; > \; \frac{2}{n-1} \; > \; \log \frac{n^2}{(n-1)^2}.
\end{eqnarray*} 
Hence $ \sum_{n=1}^{\infty} p_n < \infty$ follows.
\end{proof}

Now we will describe the coupling construction of $\lancucht$ and $\lancuchs$. We already remarked that $\bigcap_{n=1}^{\infty} \Omega_n \subset \Omega_{\infty}.$ We will define a coupling that implies also \begin{equation}\label{eqn_goal_1}\mathbb{P}\Bigg(\bigg(\bigcap_{n=1}^{\infty} \Omega_n\bigg) \cap \Omega_{\tilde{X}\geq S}\Bigg) \geq C \mathbb{P}\bigg(\bigcap_{n=1}^{\infty} \Omega_n\bigg) \qquad \textrm{for some universal} \quad C>0,\;\end{equation} and therefore \begin{equation}\label{eqn_goal_2}\mathbb{P}\left(\Omega_{\tilde{X}\geq S}\cap \Omega_{\infty}\right) > 0.\end{equation}
Thus nonergodicity of $\lancuch$ will follow from Lemma \ref{lemma_S_n_above}. We start with the following observation.
\begin{lemma}\label{lemma_coupling_above} There exists a coupling of $\tilde{X}_{n} - \tilde{X}_{n-1}$ and $Y_n,$ such that
\begin{enumerate}
\item[(a)] For every $n \geq 1$ and every value of $\tilde{X}_{n-1}$ \begin{equation}
\mathbb{P}(\tilde{X}_{n} - \tilde{X}_{n-1} = 1, Y_n = 1) \geq \mathbb{P}(\tilde{X}_{n} - \tilde{X}_{n-1} = 1)\mathbb{P}(Y_n = 1),\qquad
\end{equation}
\item[(b)] Write even or odd $\tilde{X}_{n-1}$ as $\tilde{X}_{n-1} = 2i-2$ or $\tilde{X}_{n-1} = 2i-3$ respectively.  If  $2i - 8 \geq a_n$ then the following implications hold a.s. \begin{eqnarray}\label{eqn_coupling_above_1} Y_n = 1 & \; \Rightarrow \; & \tilde{X}_{n} - \tilde{X}_{n-1} =1 \qquad \\  \label{eqn_coupling_above_2} \tilde{X}_{n} - \tilde{X}_{n-1} =-1 & \; \Rightarrow \; & Y_n = -1. \end{eqnarray} \end{enumerate}
\end{lemma}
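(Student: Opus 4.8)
The plan is to realise both increments through a single \emph{comonotonic coupling}: at each step draw one uniform $U_n \sim \mathrm{U}[0,1]$, independent across $n$ and of the past, order the three possible values as $1 > 0 > -1$, and define both $D_n := \tilde{X}_n - \tilde{X}_{n-1}$ and $Y_n$ by the corresponding inverse-CDF rule, i.e.\ $D_n = 1$ iff $U_n < \mathbb{P}(D_n = 1)$, $D_n = -1$ iff $U_n \geq 1 - \mathbb{P}(D_n = -1)$, and analogously for $Y_n$ using $\nu_n$. Since $D_n$ is assigned its correct one-step conditional law given $\tilde{X}_{n-1}$ and $Y_n$ its law $\nu_n$, building the joint law of $(\tilde{X}_n, S_n)$ step by step with fresh $U_n$ preserves the (time- and space-inhomogeneous) marginal of $\lancucht$ and the independent $\nu_n$ marginals of $\lancuchs$, so this is a legitimate coupling. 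Its key feature is that both the ``$+1$'' events and the ``$-1$'' events of $D_n$ and $Y_n$ are \emph{nested} according to the ordering of their probabilities.

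First I would read off the one-step law of $D_n$ from Algorithm~\ref{alg_Gibbs_adap}: updating the first coordinate resamples it uniformly on the two admissible values, while updating the second coordinate draws from the conditional $\propto j^{-2}$. Writing a diagonal state as $(i,i)$ (so $\tilde{X}_{n-1} = 2i-2$) and an off-diagonal state as $(i,i-1)$ (so $\tilde{X}_{n-1} = 2i-3$), and setting $\rho_i := (i-1)^{-2}/\big((i-1)^{-2} + i^{-2}\big)$, this gives in the even case $\mathbb{P}(D_n = 1) = \frac14 + \frac{2}{a_n}$ and $\mathbb{P}(D_n = -1) = (\frac12 - \frac{4}{a_n})\rho_i$, and in the odd case $\mathbb{P}(D_n = 1) = (\frac12 + \frac{4}{a_n})(1 - \rho_i)$ and $\mathbb{P}(D_n = -1) = \frac14 - \frac{2}{a_n}$. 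For comparison, $\nu_n$ yields $\mathbb{P}(Y_n = 1) = \frac14 + \frac1{a_n}$ and $\mathbb{P}(Y_n = -1) = \frac14 - \frac1{a_n}$.

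Part~(a) is then immediate for \emph{every} value of $\tilde{X}_{n-1}$: under the comonotonic coupling the two ``$+1$'' events are nested, whence $\mathbb{P}(D_n = 1, Y_n = 1) = \min\{\mathbb{P}(D_n = 1), \mathbb{P}(Y_n = 1)\} \geq \mathbb{P}(D_n = 1)\,\mathbb{P}(Y_n = 1)$. For part~(b) I would show that when $2i - 8 \geq a_n$ the marginals are ordered so that $\mathbb{P}(Y_n = 1) \leq \mathbb{P}(D_n = 1)$ and $\mathbb{P}(D_n = -1) \leq \mathbb{P}(Y_n = -1)$; under the comonotonic coupling these orderings say precisely that $\{Y_n = 1\} \subseteq \{D_n = 1\}$ and $\{D_n = -1\} \subseteq \{Y_n = -1\}$, which are exactly the two implications~(\ref{eqn_coupling_above_1})--(\ref{eqn_coupling_above_2}). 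In each case one of the two orderings is trivial, namely $\frac14 + \frac1{a_n} \leq \frac14 + \frac2{a_n}$ in the even case and $\frac14 - \frac2{a_n} \leq \frac14 - \frac1{a_n}$ in the odd case.

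The main obstacle is the remaining nontrivial ordering in each case, which pits the adaptation bias $\frac{4}{a_n}$ against the asymmetry $\delta_i := \rho_i - \frac12$ of the $\propto j^{-2}$ conditional. Reducing the even inequality $(\frac12 - \frac4{a_n})\rho_i \leq \frac14 - \frac1{a_n}$ to $\delta_i(\frac12 - \frac4{a_n}) \leq \frac1{a_n}$, and the odd inequality $\frac14 + \frac1{a_n} \leq (\frac12 + \frac4{a_n})(1 - \rho_i)$ to $\delta_i(\frac12 + \frac4{a_n}) \leq \frac1{a_n}$, it suffices to bound $\delta_i = \frac{2i-1}{2(2i^2 - 2i + 1)} \leq \frac{1}{2(i-1)}$ and to invoke the hypothesis $a_n \leq 2i - 8$ (which forces $i \geq 9$ since $a_n > 8$). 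Then $a_n \leq 2(i-1)$ together with $i \geq 9$ bounds each of the two resulting terms by $\frac{1}{2a_n}$, giving both inequalities. This is where the constants $4$ and $8$ in the statement enter: they ensure that once $\tilde{X}_{n-1}$ exceeds $a_n$ by the fixed margin, the upward push of the adaptation dominates the $O(1/i)$ downward asymmetry of the target conditionals, which is exactly what upgrades the stochastic comparison of part~(a) into the deterministic domination $D_n \geq Y_n$ of part~(b).
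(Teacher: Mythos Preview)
Your proposal is correct and follows essentially the same route as the paper. Both arguments compute the exact one-step laws of $\tilde X_n-\tilde X_{n-1}$ in the even and odd cases, establish the stochastic ordering $\mathcal L(D_n)\ge_{\rm st}\nu_n$ when $2i-8\ge a_n$, and realise the coupling via the quantile (comonotonic) construction so that the ordering becomes the almost-sure implications in~(b); the only cosmetic difference is that the paper passes through an auxiliary intermediate measure $\mu_n^i$ dominating $\nu_n$ and dominated by both one-step laws, whereas you compare each case to $\nu_n$ directly via the bound $\delta_i\le\frac{1}{2(i-1)}$.
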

\begin{proof} Property \textit{(a)} is a simple fact for any two $\{-1,0,1\}$ valued random variables $Z$ and $Z'$ with distributions say $\{d_1, d_2, d_3\}$ and $\{d_1', d_2', d_3'\}$. Assign $\mathbb{P}(Z=Z'=1):= \min\{d_3, d_3'\}$ and \textit{(a)} follows. To establish \textit{(b)} we analyse the dynamics of $\lancuch$ and consequently of $\lancucht.$ Recall Algorithm \ref{alg_Gibbs_adap} and the update rule for $\alpha_n$ in (\ref{eqn_formula_for_alpha}). Given $X_{n-1} = (i,j),$ the algorithm will obtain the value of $\alpha_n$ in step 1, next draw a coordinate according to $(\alpha_{n, 1}, \alpha_{n, 2})$ in step 2. In steps 3 and 4 it will move according to conditional distributions for updating the first or the second coordinate. These distributions are $$(1/2, 1/2) \qquad \textrm{and} \qquad \bigg(\frac{i^2}{i^2+(i-1)^2}, \frac{(i-1)^2}{i^2+(i-1)^2}\bigg)$$ respectively. Hence given $X_{n-1} = (i,i)$ the distribution of $X_{n} \in \{(i, i-1), (i,i), (i+1,i)\}$ is
\begin{equation}\label{eqn_X_dynamics_1}
\bigg( \big(\frac{1}{2}-\frac{4}{a_n}\big) \frac{i^2}{i^2+(i-1)^2}   \;,\;  1-\big(\frac{1}{2}-\frac{4}{a_n}\big) \frac{i^2}{i^2+(i-1)^2}  - \big( \frac{1}{4} + \frac{2}{a_n} \big)  \;,\; \frac{1}{4} + \frac{2}{a_n} \bigg),\;\;
\end{equation}
whereas if $X_{n-1} = (i,i-1)$ then $X_{n} \in \{(i-1, i-1), (i,i-1), (i,i)\}$ with probabilities
\begin{equation}\label{eqn_X_dynamics_2}
\bigg(\frac{1}{4} - \frac{2}{a_n}    \;,\;1 - \big(\frac{1}{4} - \frac{2}{a_n}  \big) -  \big(\frac{1}{2}+\frac{4}{a_n}\big)\frac{(i-1)^2}{i^2+(i-1)^2}  \;,\;  \big(\frac{1}{2}+\frac{4}{a_n}\big)\frac{(i-1)^2}{i^2+(i-1)^2} \bigg),\;\;
\end{equation}
respectively. We can conclude the evolution of $\lancucht.$ Namely, if $\tilde{X}_{n-1} = 2i-2$ then the distribution of $\tilde{X}_n - \tilde{X}_{n-1} \in \{-1,0,1\}$ is given by (\ref{eqn_X_dynamics_1}) and if $\tilde{X}_{n-1} = 2i-3$ then the distribution of $\tilde{X}_n - \tilde{X}_{n-1} \in \{-1,0,1\}$ is given by (\ref{eqn_X_dynamics_2}). Let $\leq_{\textrm{st}}$ denote stochastic ordering. By simple algebra both measures defined in (\ref{eqn_X_dynamics_1}) and (\ref{eqn_X_dynamics_2}) are stochastically bigger then
\begin{eqnarray}\label{eqn_X_dynamics_bound}
\mu_n^i & = & (\mu_{n,1}^i, \mu_{n,2}^i, \mu_{n,3}^i ), \end{eqnarray}
where
\begin{eqnarray}
\mu_{n,1}^i & = & \big(\frac{1}{4} - \frac{2}{a_n}\big)\big(1+\frac{2}{i}\big) \;\; = \;\; \frac{1}{4} - \frac{1}{a_n} - \frac{2i+8 - a_n}{2ia_n}, \label{eqn_intermeasure_minus_1} \\
\mu_{n,2}^i & = & 1 -\big(\frac{1}{4} - \frac{2}{a_n}\big)\big(1+\frac{2}{i}\big) - \big(\frac{1}{4}+\frac{2}{a_n}\big)\big(1-\frac{2}{\max\{4,i\}}\big),\nonumber \\\mu_{n,3}^i &=& \big(\frac{1}{4}+\frac{2}{a_n}\big)\big(1-\frac{2}{\max\{4,i\}}\big) \;\;=\;\; \frac{1}{4} + \frac{1}{a_n} + \frac{2\max\{4,i\}-8 - a_n}{2a_n \max\{4,i\}}.\label{eqn_intermeasure_plus_1} \qquad
\end{eqnarray}
Recall $\nu_n,$ the distribution of $Y_n$ defined in (\ref{eqn_distr_of_Y}). Examine (\ref{eqn_intermeasure_minus_1}) and (\ref{eqn_intermeasure_plus_1}) to see that if $2i - 8 \geq a_n,$ then $\mu_n^i \geq_{\textrm{st}} \nu_n.$ Hence in this case also the distribution of $\tilde{X}_{n} - \tilde{X}_{n-1}$ is stochastically bigger then the distribution of $Y_n.$ The joint probability distribution of $(\tilde{X}_{n} - \tilde{X}_{n-1}, Y_n)$ satisfying (\ref{eqn_coupling_above_1}) and (\ref{eqn_coupling_above_2}) follows. \end{proof}
\begin{proof}[Proof of Proposition~\ref{fact_X_nonergodic}] Define \begin{equation}
\Omega_{1, \tilde{X}}  := \Big\{\omega \in \Omega: \tilde{X}_n - \tilde{X}_{n-1} =1 \quad \textrm{for every} \quad 0 < n \leq c_1\Big\}.
\end{equation} Since the distribution of $\tilde{X}_n - \tilde{X}_{n-1}$ is stochastically bigger then $\mu_n^i$ defined in (\ref{eqn_X_dynamics_bound}) and $\mu_n^i(1) > c > 0$ for every $i$ and $n,$ \begin{equation}\nonumber
\mathbb{P}\big(\Omega_{1, \tilde{X}}\big) =: p_{1,\tilde{X}} > 0.
\end{equation}
By Lemma \ref{lemma_coupling_above} \textit{(a)} we have
\begin{equation}
\mathbb{P}\big(\Omega_{1, \tilde{X}}\cap \Omega_1 \big) \geq p_{1,S}\,p_{1,\tilde{X}} > 0.
\end{equation}
Since $S_{c_1}=\tilde{X}_{c_1} = c_1 = b_1,$ on $\Omega_{1, \tilde{X}}\cap \Omega_1,$ the requirements for Lemma \ref{lemma_coupling_above} \textit{(b)} hold for $n-1=c_1.$ We shall use Lemma \ref{lemma_coupling_above} \textit{(b)}  iteratively to keep $\tilde{X}_n \geq S_n$ for every $n.$ Recall that we write $\tilde{X}_{n-1}$ as $\tilde{X}_{n-1} = 2i-2$ or $\tilde{X}_{n-1} = 2i-3.$ If $2i - 8 \geq a_n$ and $\tilde{X}_{n-1} \geq S_{n-1}$ then by Lemma \ref{lemma_coupling_above} \textit{(b)} also $\tilde{X}_{n} \geq S_{n}.$
Clearly if $\tilde{X}_{k} \geq S_{k}$ and $S_k \geq \frac{b_{n-1}}{2}$ for $c_{n-1} < k \leq c_n$ then $\tilde{X}_k \geq \frac{b_{n-1}}{2}$ for $c_{n-1} < k \leq c_n,$ hence $$2i-2 \geq  \frac{b_{n-1}}{2} \quad \textrm{for}\quad  c_{n-1} < k \leq c_n.$$
This in turn gives $2i - 8 \geq \frac{b_{n-1}}{2} - 6$ for $c_{n-1} < k \leq c_n$ and since $a_k = 10 + \log(n),$ for the iterative construction to hold, we need $b_{n} \geq 32+2\log(n+1).$ By the definition of $b_n$ and standard algebra we have
$$b_n \geq 1000\left(1 + \sum_{i=2}^n \frac{1}{10+\log(n)}\right) \geq 32+2\log(n+1) \quad \textrm{for every } \; n\geq 1.$$
Summarising the above argument provides
\begin{eqnarray*}
\mathbb{P}(X_{n, 1} \to \infty) & \geq & \mathbb{P}\left(\Omega_{\infty} \cap \Omega_{\tilde{X}\geq S} \right) \;\; \geq \;\; \mathbb{P}\Bigg(\bigg(\bigcap_{n=1}^{\infty} \Omega_n\bigg) \cap \Omega_{\tilde{X}\geq S}\Bigg) \\ &\geq& \mathbb{P}\Bigg(\Omega_{1, \tilde{X}} \cap \bigg(\bigcap_{n=1}^{\infty} \Omega_n\bigg) \cap \Omega_{\tilde{X}\geq S}\Bigg) \\ & \geq & p_{1,\tilde{X}}p_{1,S} \prod_{n=2}^{\infty} (1-p_n) \;\; > \;\; 0.
\end{eqnarray*}
Hence $\lancuch$ is not ergodic, and in particular $\|\pi_n - \pi\|_{\textrm{TV}}  \nrightarrow 0.$
\end{proof}

\section*{Acknowledgements} This paper was written while the first author
was a postdoctoral fellow at the Department of Statistics, University
of Toronto.  Both authors were partially funded by NSERC of Canada.


\end{document}